\documentclass[a4paper,fleqn]{llncs}

%---------- Daniel's packages ----------
\usepackage[utf8]{inputenc}
\usepackage{microtype}
\usepackage{todonotes}
\usepackage{amsmath}
\usepackage{amssymb}

\usepackage{xstring}
\usepackage{url}
\usepackage{tabularx}
\usepackage{xspace}
\usepackage{nicefrac}

%\newtheorem{remark}{Remark}
%\newtheorem{lemma}{Lemma}
%\newtheorem{theorem}{Theorem}
%\newtheorem{corollary}{Corollary}

%%%%%%%%%%%%%%%%%other packages

\usepackage{tikz}
\usetikzlibrary{decorations.pathreplacing}
\usetikzlibrary{decorations.pathmorphing}
\usetikzlibrary{shapes,shapes.symbols,automata,arrows}
\usetikzlibrary{calc}
\usetikzlibrary{patterns}

\tikzset{p0/.style = {shape = circle, draw, thick, minimum size = 0.7cm}}
\tikzset{p1/.style = {rectangle, minimum size=.7cm, draw, thick}}
\tikzset{>=stealth, shorten >=1pt}
\tikzset{every edge/.style = {thick, ->, draw}}
\tikzset{every loop/.style = {thick, ->, draw}}
\tikzstyle{fault}=[dashed]

\pgfdeclarelayer{background}
\pgfsetlayers{background,main}

% Usage: \ParityVertexZero{id}{label}{color}{position}
\newcommand{\ParityVertexZero}[4]{
	\node[p0,inner sep=1pt] (#1) at #4 {$\nicefrac{#2}{#3}$};
}
\newcommand{\ParityVertexOne}[4]{
	\node[p1] (#1) at #4 {$\nicefrac{#2}{#3}$};
}

\newcommand{\openquestion}[1]{\textbf{#1}}

% Needed to fill split circles differently in both halves
% Taken from http://tex.stackexchange.com/a/55604
\makeatletter
\tikzset{circle split part fill/.style  args={#1,#2}{%
 alias=tmp@name, % Jake's idea !!
  postaction={%
    insert path={
     \pgfextra{% 
     \pgfpointdiff{\pgfpointanchor{\pgf@node@name}{center}}%
                  {\pgfpointanchor{\pgf@node@name}{east}}%            
     \pgfmathsetmacro\insiderad{\pgf@x}
      \fill[#1] (\pgf@node@name.base) ([xshift=-\pgflinewidth]\pgf@node@name.east) arc
                          (0:180:\insiderad-\pgflinewidth)--cycle;
      \fill[#2] (\pgf@node@name.base) ([xshift=\pgflinewidth]\pgf@node@name.west)  arc
                           (180:360:\insiderad-\pgflinewidth)--cycle; 
         }}}}}  
 \makeatother

\usepackage{booktabs}

% Some colors and style definitions
\definecolor{myred}{rgb}{1,0.604,0.604}
\definecolor{mydarkred}{rgb}{1,0.345,0.345}
\definecolor{myblue}{rgb}{0.635,0.675,0.966}
\definecolor{mydarkblue}{rgb}{0.412,0.475,0.957}
\definecolor{myyellow}{rgb}{1,0.976,0.604}
\definecolor{mydarkyellow}{rgb}{1,0.961,0.345}

\tikzset{
	assign/.style = { fill=myblue },
	choice/.style = { fill=myred },
	check/.style  = { fill=myyellow }
}

%%%% NEWCOMMANDS

\newcommand{\myquot}[1]{``#1''}

\newcommand{\bigo}[0]{\mathcal{O}}

\newcommand{\size}[1]{|#1|}
\newcommand{\card}[1]{\size{#1}}
\newcommand{\set}[1]{\{ #1 \}}
\renewcommand{\epsilon}{\varepsilon}

\newcommand{\arena}{\mathcal{A}}
\newcommand{\game}{\mathcal{G}}
\newcommand{\col}{\Omega}
\newcommand{\wincond}{\mathrm{Win}}
\newcommand{\winreg}{\mathcal{W}}

\newcommand{\safety}{\mathrm{Safety}}
\newcommand{\parity}{\mathrm{Parity}}

\newcommand{\mem}{\mathcal{M}}
\newcommand{\init}{\mathrm{Init}}
\newcommand{\update}{\mathrm{Upd}}
\newcommand{\nxt}{\mathrm{Nxt}}
\newcommand{\ext}{\mathrm{ext}}

\newcommand{\exptime}{\textsc{ExpTime}}

\newcommand{\dom}{\mathrm{dom}}
\newcommand{\im}{\mathrm{im}}
\newcommand{\disturbances}[0]{\#_{d}}
\newcommand{\cur}{\mathrm{cur}}

\newcommand{\sigmaf}{\sigma_{\!f}}
\newcommand{\sigmaomega}{\sigma_{\!\omega}}
\newcommand{\rig}{\mathrm{rig}}

\usepackage{fullpage}
\usepackage{hyperref}

\title{Synthesizing Optimally Resilient Controllers\thanks{Supported by the project ``TriCS'' (ZI 1516/1-1) of the German Research Foundation (DFG) and the Saarbrücken Graduate School of Computer Science}}
\author{Daniel Neider\inst{1}, Alexander Weinert\inst{2} and Martin Zimmermann\inst{3}}

\institute{Max Planck Institute for Software Systems, 67663 Kaiserslautern, Germany \\ \email{neider@mpi-sws.org} \and
German Aerospace Center (DLR), Intelligent and Distributed Systems, Linder Höhe, 51147 Köln, Germany\\
  \email{alexander.weinert@dlr.de} \and
  University of Liverpool, Liverpool L69 3BX, United Kingdom\\
\email{martin.zimmermann@liverpool.ac.uk}
  }

\begin{document}

\maketitle

\begin{abstract}
Recently, Dallal, Neider, and Tabuada studied a generalization of the classical game-theoretic model used in program synthesis, which additionally accounts for unmodeled intermittent disturbances.
In this extended framework, one is interested in computing optimally resilient strategies, i.e., strategies that are resilient against as many disturbances as possible.
Dallal, Neider, and Tabuada showed how to compute such strategies for safety specifications.

In this work, we compute optimally resilient strategies for a much wider range of winning conditions and show that they do not require more memory than winning strategies in the classical model. Our algorithms only have a polynomial overhead in comparison to the ones computing winning strategies.
In particular, for parity conditions, optimally resilient strategies are positional and can be computed in quasipolynomial time.
\end{abstract}

%%%%%%%%%%%%%%%%%%%%%%%%%%%%%%%%%%%%%%%%%%%%%%%%%%%%%%%%%%%%%%%%%%
%%%%%%%%%%%%%%%%%%%%%%%%%%%%%%%%%%%%%%%%%%%%%%%%%%%%%%%%%%%%%%%%%%
\section{Introduction}
\label{sec_intro}
Reactive synthesis is an exciting and promising approach to solving a crucial problem, whose importance is ever-increasing due to ubiquitous deployment of embedded systems: obtaining correct and verified controllers for safety-critical systems. Instead of an engineer programming a controller by hand and then verifying it against a formal specification, synthesis automatically constructs a correct-by-construction controller from the given specification (or reports that no such controller exists). 

Typically, reactive synthesis is modeled as a two-player zero-sum game on a finite graph that is played between the system, which seeks to satisfy the specification, and its environment, which seeks to violate it.
Although this model is well understood, there are still multiple obstacles to overcome before synthesis can be realistically applied in practice. These obstacles include not only the high computational complexity of the problem, but also more fundamental ones.
Among the most prohibitive  issues in this regard is the need for a complete model of the interaction between the system and its environment, including an accurate model of the environment, the actions available to both players, as well as the effects of these actions. 

This modeling task often places an insurmountable burden on engineers as the environments in which real-life controllers are intended to operate tend to be highly complex or not fully known at design time. Also, when a controller is deployed in the real world, a common source of errors is a mismatch between the controller's intended result of an action and the actual result.
Such situations arise, e.g., in the presence of disturbances, when the effect of an action is not precisely known, or when the intended control action of the controller cannot be executed, e.g., when an actuator malfunctions. By a slight abuse of notation from control theory, such errors are subsumed under the generic term \emph{disturbance} (cf.~\cite{DBLP:conf/cdc/DallalNT16}).

To obtain controllers that can handle disturbances, one has to yield control over their occurrence to the environment. However, due to the antagonistic setting of the two-player zero-sum game, this would allow the environment to violate the specification by causing disturbances at will. Overcoming this requires the engineer to develop a realistic disturbance model, which is a highly complex task, as such disturbances are assumed to be rare events. Also, incorporating such a model into the game leads to a severe blowup in the size of the game, which can lead to intractability due to the high computational complexity of synthesis. 

To overcome these fundamental difficulties, Dallal, Neider, and Tabuada~\cite{DBLP:conf/cdc/DallalNT16} proposed a conceptually simple, yet powerful extension of infinite games termed ``games with unmodeled intermittent disturbances''.
Such games are played similarly to classical infinite games: two players, called Player~$0$ and Player~$1$, move a token through a finite graph, whose vertices are partitioned into vertices under the control of Player~$0$ and Player~$1$, respectively; the winner is declared based on a condition on the resulting play.
In contrast to classical games, however, the graph is augmented with additional \emph{disturbance edges} that originate in vertices of Player~$0$ and may lead to any other vertex. Moreover, the mechanics of how Player~$0$ moves is modified: whenever she moves the token, her move might be overridden, and the token instead moves along a disturbance edge.
This change in outcome implicitly models the occurrence of a disturbance---the intended result of the controller and the actual result differ---but it is not considered to be antagonistic.
Instead, the occurrence of a disturbance is treated as a rare event without any assumptions on frequency, distribution, etc.
This approach very naturally models the kind of disturbances typically occurring in control engineering~\cite{DBLP:conf/cdc/DallalNT16}.

As a non-technical example, consider a scenario with three siblings, Alice, Bob, and Charlie, and their father, Donald.
He repeatedly asks Alice to fetch water from a well using a jug made of clay.
Alice has three ways to fulfill that task:
she may get the water herself or she may delegate it to either Bob or Charlie.
In a simple model, the outcome of these strategies is identical: Donald's request for water is fulfilled.
This is, however, unrealistic, as this model ignores the various ways that the execution of the strategies may go wrong.
By modeling the situation as a game with disturbances, we obtain a more realistic model.

If Alice gets the jug herself, no disturbance can occur: she controls the outcome completely.
If she delegates the task to Bob, the older of her brothers, Donald may get angry with her for not fulfilling her duties herself, which should not happen infinitely often.
Finally, if she delegates the task to her younger brother Charlie, he might drop and break the jug, which would be disastrous for Alice.

These strategies can withstand different numbers of disturbances: the first strategy does not offer any possibility for disturbances, while infinitely many (a single) disturbance cause Alice to lose when using the second (the third) strategy.
This model captures the intuition about Donald's and Charlie’s behavior: both events occur non-antagonistically and their frequency is unknown.

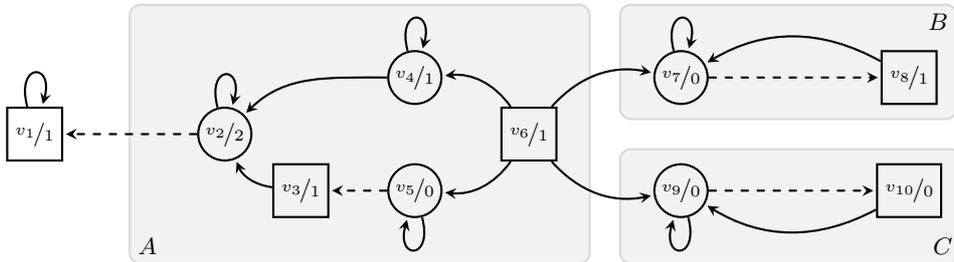
\begin{figure*}
	\centering
	\scalebox{.93}{
	\begin{tikzpicture}
		\ParityVertexOne{6}{v_6}{1}{(-1,0)}
		\ParityVertexZero{5}{v_4}{1}{(-2.5,.75)}
		\ParityVertexZero{4}{v_5}{0}{(-2.5,-.75)}
		\ParityVertexOne{3}{v_3}{1}{(-4,-.75)}
		\ParityVertexZero{2}{v_2}{2}{(-5,0)}
		\ParityVertexOne{1}{v_1}{1}{(-7.5,0)}
		\ParityVertexZero{8}{v_7}{0}{(1,.75)}
		\ParityVertexOne{9}{v_8}{1}{(4,.75)}
		\ParityVertexZero{10}{v_{9}}{0}{(1,-.75)}
		\ParityVertexOne{12}{v_{10}}{0}{(4,-.75)}
	
		\path
			(6) edge[bend right] (5) edge[bend left] (4) edge[bend left] (8) edge[bend right] (10)
			(5) edge [loop above] (5) edge[out=180,in=45] (2)
			(4) edge [loop below] (4) edge [fault] (3)
			(3) edge[bend left] (2)
			(2) edge [loop above] (2) edge [fault] (1)
			(1) edge [loop above] (1)
			(8) edge [fault] (9) edge [loop above] (8)
			(10) edge [fault] (12) edge [loop below] (10)
			(9) edge [bend right] (8)
			(12) edge [bend left] (10);
			
		\begin{pgfonlayer}{background}
		
			\coordinate (center-left) at ($(8) ! 0.5 ! (10)$);
			\coordinate (center) at ($(center-left) ! .5 ! (6)$);
			\coordinate (northwest) at ($(9.north east) + (.25cm,.6cm)$);
			\coordinate (southwest) at ($(12.south east) + (.25cm,-.6cm)$);
			\coordinate (east) at ($(2) ! .5 ! (1)$);
			\coordinate (southeast-low) at (3.south -| east);
			\coordinate (southeast) at ($(southeast-low) - (0,.6cm)$);
			\coordinate (northeast-low) at (5.north -| east);
			\coordinate (northeast) at ($(northeast-low) + (0,.6cm)$);
			
			\draw[black!15,thick,rounded corners,fill=black!5]
				(northeast) -| ($(center) - (.2cm,0)$) |- (southeast) -- cycle;
			\node[anchor=south west] at (southeast) {$A$};
			\draw[black!15,thick,rounded corners,fill=black!5]
				($(center) + (.2cm,.2cm)$) -| (northwest) -| cycle;
			\node[anchor=north east] at (northwest) {$B$};
			\draw[black!15,thick,rounded corners,fill=black!5]
				(southwest) -| ($(center) + (.2cm,-.2cm)$) -| cycle;
			\node[anchor=south east] at (southwest) {$C$};
			
		\end{pgfonlayer}

	\end{tikzpicture}
	}
	\caption{A (max-) parity game with disturbances. Disturbance edges are drawn as dashed arrows. Vertices are labeled with both a name and a color. Vertices under control of Player~$0$ are drawn as circles, while vertices under control of Player~$1$ are drawn as rectangles.}
	\label{fig:example}
\end{figure*}

This non-antagonistic nature of disturbances is different from existing approaches in the literature and causes many interesting phenomena that do not occur in the classical theory of infinite graph-based games.
In Figure~\ref{fig:example}, we show an example of a parity game with disturbances that already exhibits some of these phenomena.
In that parity game, vertices are labeled with non-negative integers, so-called colors, and Player~$0$ wins if the highest color seen infinitely often is even.
For the sake of readability and conciseness, the parity game in Figure~\ref{fig:example} does not model the example given in natural language above, but is rather constructed to showcase properties of games with disturbances.

Consider, for instance, vertex $v_2$. In the classical setting without disturbances, Player~$0$ wins every play reaching $v_2$ by simply looping in this vertex forever (since the highest color seen infinitely often is even). However, this is no longer true in the presence of disturbances: a disturbance in $v_2$ causes a play to proceed to vertex $v_1$, from which Player~$0$ can no longer win. In vertex $v_7$, Player~$0$ is in a similar, yet less severe situation: she wins every play with finitely many disturbances but loses if infinitely many disturbances occur. % since vertex $v_8$ with color $1$ is visited infinitely often in this case.
Finally, vertex $v_9$ falls into a third category: from this vertex, Player~$0$ wins every play even if infinitely many disturbances occur.
In fact, disturbances partition the set of vertices from which Player~$0$ can guarantee to win into three disjoint regions (indicated as shaded boxes in Figure~\ref{fig:example}): (A) vertices from which she can win if at most a fixed finite number of disturbances occur, (B) vertices from which she can win if any finite number of disturbances occurs but not if infinitely many occur, and (C) vertices from which she can win even if infinitely many disturbances occur.

The observation above gives rise to a question that is both theoretically interesting and practically important: if Player~$0$ can tolerate different numbers of disturbances from different vertices, how should she play to be \emph{resilient}\footnote{We have deliberately chosen the term \emph{resilience} so as to avoid confusion with the already highly ambiguous notions of \emph{robustness} and \emph{fault tolerance}.} to as many disturbances as possible, i.e., to tolerate as many disturbances as possible but still win?
Put slightly differently, disturbances induce an order on the space of winning strategies (``a winning strategy is better if it is more resilient''), and the natural problem is to compute optimally resilient winning strategies, yielding optimally resilient controllers.
Note that this is in contrast to the classical theory of infinite games, where the space of winning strategies is unstructured.

Dallal, Neider, and Tabuada~\cite{DBLP:conf/cdc/DallalNT16} have solved the problem of computing optimally resilient winning strategies for safety games. Their approach exploits the existence of maximally permissive winning strategies in safety games~\cite{BernetJaninWalukiewicz02}, which allows Player~$0$ to avoid ``harmful'' disturbance edges during a play.
In games with more expressive winning conditions, however, this is no longer possible, as witnessed by vertex $v_4$ in the example of Figure~\ref{fig:example}: although Player~$0$ can avoid a disturbance edge by looping in $v_4$ forever, she needs to move to $v_2$ eventually in order to see an even color (otherwise she loses), thereby risking to lose if a disturbance occurs.
In fact, the problem of constructing optimally resilient winning strategies for games other than safety games has been left open by Dallal, Neider, and Tabuada. In this work, we solve this problem for a large class of infinite games, including parity games.

%%%%%%%%%%%%%%%%%%%%%%%%%%%%%%%%%%%%%%%%
%%%%%%%%%%%%%%%%%%%%%%%%%%%%%%%%%%%%%%%%
\paragraph{Our Contributions.}
\label{para_ourcontribs}
In Section~\ref{sec_prelims}, we introduce the concept of \emph{resilience}, which captures for each vertex how many disturbances need to occur for Player~$0$ to lose. This generalizes the notion of determinacy and allows us to derive optimally resilient winning strategies. 

Our main result is an algorithm for computing the resilience of vertices and optimally resilient winning strategies, which we present in Section~\ref{sec_results}. This algorithm requires the game to have a prefix-independent winning condition, to be determined, and all its subgames to be (classically) solvable. The latter two conditions are necessary, as resilience generalizes determinacy and computing optimally resilient strategies generalizes solving games. We discuss these assumptions in Section~\ref{sec_discussion}.

 The algorithm uses solvers for the underlying game without disturbances as a subroutine, which it invokes a linear number of times on various subgames. For many winning conditions, the time complexity of our algorithm thus falls into the same complexity class as solving the original game without disturbances, e.g., we obtain a quasipolynomial algorithm for parity games with disturbances, which matches the currently best known upper bound for classical parity games.
Stated differently, if the three assumptions above are satisfied by a winning condition, then computing the resilience and optimally resilient strategies is not harder than determining winning regions and winning strategies (ignoring a polynomial overhead).

Our algorithm requires the winning condition of the game to be prefix-independent. We also show how to overcome this restriction by generalizing the classical notion of game reductions to the setting of games with disturbances. As a consequence, via reductions, our algorithm can be applied to prefix-dependent winning conditions. We discuss details in Section~\ref{sec_discussion}.

Altogether, we have generalized the original result of Dallal, Neider, and Tabuada from safety games to all games which are algorithmically solvable, in particular all $\omega$-regular games.

Finally, in Section~\ref{sec_outlook}, we discuss further phenomena that arise in the presence of disturbances. Amongst others, we illustrate how the additional goal of avoiding disturbances whenever possible affects the memory requirements of strategies. Similarly, we exhibit a tradeoff between resilience and the (semantic) quality of strategies in quantitative games. Moreover, we raise the question of how benevolent disturbances can be leveraged to recover from losing a play. However, an in-depth investigation of these phenomena is outside the scope of this paper and left for future work.

%%%%%%%%%%%%%%%%%%%%%%%%%%%%%%%%%%%%%%%%%%%%%%%%%%%%%%%%%%%%%%%%%%%
%%%%%%%%%%%%%%%%%%%%%%%%%%%%%%%%%%%%%%%%%%%%%%%%%%%%%%%%%%%%%%%%%%%
\section{Preliminaries}
\label{sec_prelims}
For notational convenience, we employ some  ordinal notation à la von Neumann: the non-negative integers are defined inductively as $0 = \emptyset$  and $n+1 = n \cup \set{n}$. Now, the first limit ordinal is $\omega = \set{0,1,2, \ldots}$, the set of the non-negative integers. The next two successor ordinals are $\omega+1 = \omega \cup \set{\omega}$ and $\omega+2 = \omega+1 \cup \set{\omega+1}$. These ordinals are ordered by set inclusion, i.e., we have $0 < 1< 2 < \cdots < \omega < \omega+1 < \omega+2$. For convenience of notation, we also denote the cardinality of $\omega$ by $\omega$.

%%%%%%%%%%%%%%%%%%%%%%%%%%%%%%%%%%%%%%%%
%%%%%%%%%%%%%%%%%%%%%%%%%%%%%%%%%%%%%%%%
\subsection{Infinite Games with Disturbances.}
\label{subsec_games}
An \emph{arena (with unmodeled disturbances)}~$\arena = (V, V_0, V_1, E, D)$ consists of a finite directed graph~$(V, E)$, a partition~$\set{V_0, V_1}$ of $V$ into the set of vertices~$V_0$ of Player~$0$ (denoted by circles) and the set of vertices of Player~$1$ (denoted by squares), and a set~$D \subseteq V_0 \times V$ of \emph{disturbance edges} (denoted by dashed arrows). Note that only vertices of Player~$0$ have outgoing disturbance edges. We require that every vertex~$v \in V$ has a successor $v'$ with $(v,v') \in E$ to avoid finite plays.

A \emph{play} in $\arena$ is an infinite sequence
$ \rho = (v_0, b_0) (v_1, b_1) (v_2, b_2) \cdots \in (V\times\set{0,1})^\omega $
such that $b_0 = 0$ and for all $j>0$: $b_j = 0$ implies $(v_{j-1}, v_j) \in E$, and $b_j = 1$ implies $(v_{j-1}, v_j) \in D$. Hence, the additional bits~$b_j$ for~$j > 0$ denote whether a standard or a disturbance edge has been taken to move from $v_{j-1}$ to $v_j$, while $b_0$ is always zero.
We say $\rho$ starts in $v_0$. A play prefix~$(v_0, b_0) \cdots (v_j, b_j)$ is defined similarly and ends in $v_j$. The number of disturbances in a play~$\rho = (v_0, b_0) (v_1, b_1) (v_2, b_2) \cdots$ is $\disturbances(\rho) = \size{\set{j \in \omega \mid b_j = 1}}$, which is either some $k \in \omega$ (if there are finitely many disturbances, namely $k$) or it is equal to $\omega$ (if there are infinitely many). A play~$\rho$ is \emph{disturbance-free}, if $\disturbances(\rho) = 0$.

 A \emph{game (with unmodeled disturbances)}, denoted by~$\game = (\arena, \wincond)$, consists of an arena~$\arena = (V, V_0, V_1, E, D)$ and a winning condition~$\wincond \subseteq V^\omega$. A play~$\rho = (v_0, b_0) (v_1, b_1) (v_2, b_2) \cdots$ is winning for Player~$0$, if  $v_0 v_1 v_2 \cdots \in \wincond$, otherwise it is winning for Player~$1$. Hence, winning is oblivious to occurrences of disturbances. A winning condition~$\wincond$ is \emph{prefix-independent} if for all $\rho \in V^\omega$ and all $w \in V^*$ we have $\rho \in \wincond$ if and only if $w\rho \in \wincond$. If $\wincond$ is not prefix-independent, then it is called prefix-dependent.

 In examples, we often use the parity condition, the canonical $\omega$-regular winning condition. Let $\col \colon V \rightarrow \omega$ be a coloring of a set~$V$ of vertices. The (max-) parity condition
\[\parity(\col) = \set{ v_0 v_1 v_2 \cdots \in V^\omega \mid \limsup \col(v_0) \col(v_1) \col(v_2) \cdots \text{ is even}}\]
requires the maximal color occurring infinitely often during a play to be even. A game~$(\arena, \wincond)$ is a parity game, if $\wincond = \parity(\col)$ for some coloring~$\col$ of the vertices of $\arena$. In figures, we label a vertex~$v$ with color~$c$ by $\nicefrac{v}{c}$.

In our proofs we make use of the safety condition
\[
\safety(U) = \set{v_0 v_1 v_2 \cdots \in V^\omega \mid v_j \notin U \text{ for every }j \in \omega}
\]
for a given set~$U \subseteq V$ of {unsafe} vertices. It requires Player~$0$ to only visit safe vertices, i.e., Player~$1$ wins a play if it visits at least one unsafe vertex. Note that due to notational convenience, we specify a safety condition by giving the \emph{unsafe} vertices instead of the \emph{safe} ones, i.e., $V \setminus U$, which is more common. 

A strategy for Player~$i \in \set{0,1}$ is a function~$\sigma \colon V^*V_i \rightarrow V$ such that $(v_j, \sigma(v_0 \cdots v_j)) \in E$ holds for every $v_0 \cdots v_j \in V^*V_i$. A play~$(v_0,b_0) (v_1, b_1) (v_2,b_2) \cdots $ is consistent with $\sigma$, if $v_{j+1} = \sigma(v_0 \cdots v_j)$ for every $j$ with $v_j \in V_i$ and $b_{j+1} = 0$, i.e., if the next vertex is the one prescribed by the strategy unless a disturbance edge is used. 

\begin{remark}
A  strategy~$\sigma$ does not have access to the bits indicating whether a disturbance occurred or not. However, this is not a restriction for Player~$0$: let $(v_0,b_0) (v_1, b_1) (v_2,b_2) \cdots $ be a play with $b_j = 1$ for some $j > 0$. We say that this disturbance is consequential (w.r.t.\ $\sigma$), if $v_j \neq \sigma(v_0 \cdots v_{j-1})$, i.e., if the disturbance transition~$(v_{j-1}, v_j)$ traversed by the play did not lead to the vertex the strategy prescribed. Such consequential disturbances can be detected by comparing the actual vertex~$v_j$ to $\sigma$'s output~$\sigma(v_0 \cdots v_{j-1})$. Hence, the bits~$b_j$ denoting consequential disturbances (w.r.t.~$\sigma$) can be reconstructed by observing the sequence of vertices and by having access to the strategy~$\sigma$.

On the other hand, inconsequential disturbances can just be ignored. In particular, the number of consequential disturbances is always at most the number of disturbances during each play.
\end{remark}

%%%%%%%%%%%%%%%%%%%%%%%%%%%%%%%%%%%%%%%%
%%%%%%%%%%%%%%%%%%%%%%%%%%%%%%%%%%%%%%%%
\subsection{Positional and Finite-state Strategies.}
\label{subsec_finitestatestrategies}
Fix a game~$(\arena, \wincond)$ with $\arena = (V, V_0, V_1, E, D)$. A strategy~$\sigma$ for Player~$i$ is positional, if $\sigma(v_0 \cdots v_j) = \sigma(v_j)$ for all $v_0 \cdots v_j \in V^*V_i$, i.e., the output of $\sigma$ only depends on the last vertex. 

 A memory structure for $\arena$ is a triple~$\mem = (M, \init, \update)$ where $M$ is a finite set of memory states, $\init \colon V \rightarrow M$ is the initialization function, and $\update \colon M \times V \rightarrow M$ is the memory update function. 

The update function can be extended to finite play prefixes: $\update^+(v) = \init(v)$ and $\update^+(wv) = \update(\update^+(w), v)$ for $w \in V^+$ and $v \in V$.
A next-move function $\nxt \colon V_i \times M \rightarrow V$ for Player~$i$ has to satisfy $(v, \nxt(v, m)) \in E$ for all $v \in V_i$ and all $m \in M$.
It induces a strategy~$\sigma$ for Player~$i$ with memory~$\mem$ via $\sigma(v_0\cdots v_j) = \nxt(v_j, \update^+(v_0 \cdots v_j))$.

We say that a strategy~$\sigma$ is implementable by a memory structure~$\mem$, if there is a next-move function~$\nxt$ such that $\mem$ and $\nxt$ induce $\sigma$. If $\sigma$ is implementable by some memory structure, then we call $\sigma$ finite-state.

%%%%%%%%%%%%%%%%%%%%%%%%%%%%%%%%%%%%%%%%
%%%%%%%%%%%%%%%%%%%%%%%%%%%%%%%%%%%%%%%%
\subsection{Infinite Games without Disturbances.}
\label{subsec_gameswithoutdisturbances}
We can characterize the classical notion of infinite games, i.e., those without disturbances, (see, e.g.,~\cite{GraedelThomasWilke02}) as a  special case of games with disturbances. Let $\game$ be a game with vertex set~$V$. A strategy~$\sigma$ for Player~$i$ in $\game$ is a winning strategy for her from $v \in V$, if every disturbance-free play that starts in $v$ and that is consistent with $\sigma$ is winning for Player~$i$.

The winning region~$\winreg_i(\game)$ of Player~$i$ in $\game$ contains those vertices~$v \in V$ from which Player~$i$ has a winning strategy. Thus, the winning regions of $\game$ are independent of the disturbance edges, i.e., we obtain the classical notion of infinite games. We say that Player~$i$ wins $\game$ from $v$, if $v \in \winreg_i(\game)$. Solving a game amounts to determining its winning regions. Note that every game has disjoint winning regions. In contrast, a game is determined, if every vertex is in either winning region.

%%%%%%%%%%%%%%%%%%%%%%%%%%%%%%%%%%%%%%%%
%%%%%%%%%%%%%%%%%%%%%%%%%%%%%%%%%%%%%%%%
\subsection{Resilient Strategies.}
\label{subsec_gameswithdisturbances}
Let $\game$ be a game with vertex set~$V$ and let $\alpha \in \omega+2$. A strategy~$\sigma$ for Player~$0$ in $\game$ is \emph{$\alpha$-resilient from~$v \in V$} if every play~$\rho$ that starts in $v$, that is consistent with $\sigma$, and with~$\disturbances(\rho) < \alpha$, is winning for Player~$0$. Thus, a $k$-resilient strategy with $k \in \omega$ is winning even under at most $k-1$ disturbances, an $\omega$-resilient strategy is winning even under any finite number of disturbances, and an $(\omega+1)$-resilient strategy is winning even under infinitely many disturbances. 

\begin{remark}
\label{remark_resilienceprops}
Let $v$ be a vertex.
\begin{enumerate}
	\item\label{remark_resilienceprops_mono} Let $\alpha , \alpha' \in \omega+2$ with $\alpha > \alpha'$. If a strategy is $\alpha$-resilient from $v$, then it is also $\alpha'$-resilient from $v$.
	\item\label{remark_resilienceprops_zeroresil} Every strategy is $0$-resilient from $v$.
	\item\label{remark_resilienceprops_winning} A strategy is~$1$-resilient from~$v$ if and only if it is winning for Player~$0$ from~$v$. 
\end{enumerate}
\end{remark}
We define the resilience of a vertex~$v$ of $\game$ as 
\[
r_\game(v) = \sup \set{ \alpha\in\omega+2 \mid \text{Player~$0$ has an $\alpha$-resilient strategy for $\game$ from $v$} }.
\]
Note that the definition is not antagonistic, i.e., it is not defined via strategies of Player~$1$. Nevertheless, due to the remarks above, resilient strategies generalize winning strategies. 

\begin{lemma}
\label{lemma_winningregionsvsresilience}
Let $\game$ be a game and $v$ a vertex of $\game$.
\begin{enumerate}
\item\label{lemma_winningregionsvsresilience_zero} $r_\game(v) > 0$ if and only if $v \in \winreg_0(\game)$.
\item\label{lemma_winningregionsvsresilience_one} If $\game$ is determined, then $r_\game(v) = 0$ if and only if $v \in \winreg_1(\game)$.
\end{enumerate}
\end{lemma}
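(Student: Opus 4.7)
The plan is to derive both claims directly from Remark~\ref{remark_resilienceprops} and the definitions of $r_\game$ and the winning regions, with only very minor bookkeeping about ordinals and the supremum.

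For part~\ref{lemma_winningregionsvsresilience_zero}, I would first unfold the definition: $r_\game(v) > 0$ means that the set $S = \set{\alpha \in \omega+2 \mid \text{Player~0 has an $\alpha$-resilient strategy from $v$}}$ has a supremum strictly greater than~$0$. Since every element of $\omega+2$ is an ordinal, $\sup S > 0$ implies that $S$ contains some $\alpha \geq 1$, and hence (by Remark~\ref{remark_resilienceprops}.\ref{remark_resilienceprops_mono}) also $\alpha' = 1$. By Remark~\ref{remark_resilienceprops}.\ref{remark_resilienceprops_winning}, the existence of a $1$-resilient strategy from~$v$ is the same as $v \in \winreg_0(\game)$. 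Conversely, if $v \in \winreg_0(\game)$, then any winning strategy is $1$-resilient (again by Remark~\ref{remark_resilienceprops}.\ref{remark_resilienceprops_winning}), so $1 \in S$ and therefore $r_\game(v) \geq 1 > 0$.

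For part~\ref{lemma_winningregionsvsresilience_one}, I would combine part~\ref{lemma_winningregionsvsresilience_zero} with determinacy. Note first that $r_\game(v) \geq 0$ always holds, because every strategy is $0$-resilient by Remark~\ref{remark_resilienceprops}.\ref{remark_resilienceprops_zeroresil}, so $0 \in S$. Hence $r_\game(v) = 0$ is equivalent to $r_\game(v) \not> 0$, which by part~\ref{lemma_winningregionsvsresilience_zero} is equivalent to $v \notin \winreg_0(\game)$. Under the determinacy assumption every vertex lies in some winning region, and the two winning regions are disjoint, so $v \notin \winreg_0(\game)$ is equivalent to $v \in \winreg_1(\game)$, which closes the argument.

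There is no real obstacle; the whole proof is a short unfolding argument, and the only thing one has to be a little careful about is the passage from the supremum defining $r_\game(v)$ to an actual $\alpha$-resilient strategy, which is handled cleanly by the monotonicity statement in Remark~\ref{remark_resilienceprops}.\ref{remark_resilienceprops_mono}.
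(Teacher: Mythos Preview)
Your proposal is correct and follows essentially the same approach as the paper: both arguments reduce part~\ref{lemma_winningregionsvsresilience_zero} to Remark~\ref{remark_resilienceprops}.\ref{remark_resilienceprops_mono} and~\ref{remark_resilienceprops}.\ref{remark_resilienceprops_winning}, and then obtain part~\ref{lemma_winningregionsvsresilience_one} by combining the negation of part~\ref{lemma_winningregionsvsresilience_zero} with determinacy. Your treatment is in fact slightly more explicit about the passage from the supremum to an actual witnessing $\alpha$, which the paper leaves implicit.
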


\begin{proof}
\ref{lemma_winningregionsvsresilience_zero}.) The resilience of $v$ is greater than zero if and only if Player~$0$ has a $1$-resilient strategy from $v$ due to Item~\ref{remark_resilienceprops_mono} of Remark~\ref{remark_resilienceprops}. The latter condition is equivalent to Player~$0$ having a winning strategy for $\game$ from $v$, i.e., to $v \in \winreg_0(\game)$, due to Item~\ref{remark_resilienceprops_winning} of Remark~\ref{remark_resilienceprops}.

\ref{lemma_winningregionsvsresilience_one}.) Due to Items~\ref{remark_resilienceprops_mono} and \ref{remark_resilienceprops_winning} of Remark~\ref{remark_resilienceprops}, the resilience of $v$ is zero if and only if Player~$0$ has no winning strategy for $\game$ from $v$, i.e., $v \notin \winreg_0(\game)$. Due to determinacy, this is equivalent to $v \in \winreg_1(\game)$.	\qed
\end{proof}
Note that determinacy is a necessary condition for Item~\ref{lemma_winningregionsvsresilience_one}. In an undetermined game, the vertices that are in neither winning region have resilience zero, due to Item~\ref{lemma_winningregionsvsresilience_zero}, but are in particular not in $\winreg_1(\game)$.

A strategy~$\sigma$ is \emph{optimally resilient}, if it is $r_\game(v)$-resilient from every vertex~$v$. Every such strategy is a \emph{uniform} winning strategy for Player~$0$, i.e., a strategy that is winning from every vertex in her winning region. Hence, positional optimally resilient strategies can only exist in games which have uniform positional winning strategies for Player~$0$.

Our goal is to determine the mapping~$r_\game$ and to compute an optimally resilient strategy.

%%%%%%%%%%%%%%%%%%%%%%%%%%%%%%%%%%%%%%%%%%%%%%%%%%%%%%%%%%%%%%%%%%
%%%%%%%%%%%%%%%%%%%%%%%%%%%%%%%%%%%%%%%%%%%%%%%%%%%%%%%%%%%%%%%%%%
\section{Computing Optimally Resilient Strategies}
\label{sec_results}
To compute optimally resilient strategies, we first characterize the vertices of finite resilience in Subsection~\ref{subsec_finiteresil}. All other vertices either have resilience~$\omega$ or $\omega+1$. To distinguish between these possibilities, we show how to determine the vertices with resilience~$\omega+1$ in Subsection~\ref{subsec_resilomegaplus1}. In Subsection~\ref{subsec_strategies}, we show how to compute optimally resilient strategies using the results of the first two subsections. 
We only consider prefix-independent winning conditions in Subsections~\ref{subsec_finiteresil} and \ref{subsec_strategies}. In Section~\ref{sec_discussion}, we show how to overcome this restriction.

%%%%%%%%%%%%%%%%%%%%%%%%%%%%%%%%%%%%%%%%
%%%%%%%%%%%%%%%%%%%%%%%%%%%%%%%%%%%%%%%%
\subsection{Characterizing Vertices of Finite Resilience}
\label{subsec_finiteresil}
Our goal in this subsection is to characterize vertices with finite resilience in a game with prefix-independent winning condition, i.e., those vertices from which Player~$0$ can win even under~$k-1$ disturbances, but not under $k$ disturbances, for some $k \in \omega$. 

To illustrate our approach, consider the parity game in Figure~\ref{fig:example} (on Page~\pageref{fig:example}), which is determined and has a prefix-independent winning condition. The winning region of Player~$1$ only contains the vertex~$v_1$. Thus, by Lemma~\ref{lemma_winningregionsvsresilience}, $v_1$ is the only vertex with resilience zero, every other vertex has a larger resilience. 

Now, consider the vertex~$v_2$, which has a disturbance edge leading into the winning region of Player~$1$. Due to this edge, $v_2$ has resilience at most one. This implies, as argued above, that~$v_2$ has resilience precisely one. The unique disturbance-free play starting in $v_1$ is consistent with every strategy for Player~$0$ and violates the winning condition. Due to prefix-independence, prepending the disturbance edge does not change the winner and consistency with every strategy for Player~$0$. Hence, this play witnesses that $v_2$ has resilience at most one, while $v_2$ being in Player~$0$'s winning region yields the matching lower bound.
However, $v_2$ is the only vertex to which this reasoning applies.
Now, consider $v_3$: from here, Player~$1$ can force a play to visit $v_2$ using a standard edge. Thus, $v_3$ has resilience one as well. Again, this is the only vertex to which this reasoning is applicable. 

In particular, from $v_4$, Player~$0$ can avoid reaching the vertices for which we have already determined the resilience by using the self loop. However, this comes at a steep price for her: doing so results in a losing play, as the color of $v_4$ is odd. Thus, if she wants to have a chance at winning, she has to take a risk by moving to $v_2$, from which she has a $1$-resilient strategy, i.e., one that is winning if no more disturbances occur. For this reason, $v_4$ has resilience one as well. The same reasoning applies to $v_6$: Player~$1$ can force the play to $v_4$ and from there Player~$0$ has to take a risk by moving to $v_2$.

The vertices~$v_3$, $v_4$, and $v_6$ share the property that Player~$1$ can either enforce a play violating the winning condition or reach a vertex with already determined finite resilience. These three vertices are the only ones currently satisfying this property. They all have resilience one since Player~$1$ can enforce to reach a vertex of resilience one, but he cannot enforce reaching a vertex of resilience zero. Now, we can also determine the resilience of $v_5$: the disturbance edge from $v_5$ to $v_3$ witnesses it being two.

Afterwards, these two arguments no longer apply to new vertices: no disturbance edge leads from a vertex~$v \in \set{v_7, \ldots, v_{10}} $ to some vertex whose resilience is already determined and Player~$0$ has a winning strategy from each such $v$ that additionally avoids vertices whose resilience is already determined. Thus, our reasoning cannot determine their resilience. This is consistent with our goal, as all four vertices have non-finite resilience: $v_7$ and $v_8$ have resilience~$\omega$ and $v_9$ and $v_{10}$ have resilience~$\omega+1$.
Our reasoning here cannot distinguish these two values.
We solve this problem in Subsection~\ref{subsec_resilomegaplus1}.

We now formalize the reasoning sketched above: starting from the vertices in Player~$1$'s winning region having resilience zero, we use a so-called disturbance update and a risk update to determine all vertices of finite resilience.
A disturbance update computes the resilience of vertices having a disturbance edge to a vertex whose resilience is already known (such as vertices $v_2$ and $v_5$ in the example of Figure~\ref{fig:example}). A risk update, on the other hand, determines the resilience of vertices from which either Player~$1$ can force a visit to a vertex with known resilience (such as vertices $v_3$ and $v_6$) or Player~$0$ needs to move to such a vertex in order to avoid losing (e.g., vertex $v_4$).
To simplify our proofs, we describe both as monotone operators updating partial rankings mapping vertices to $\omega$, which might update already defined values. We show that applying these updates in alternation eventually yields a stable ranking that indeed characterizes the vertices of finite resilience. 

Throughout this section, we fix a game~$\game = (\arena, \wincond)$ with $\arena = (V, V_0, V_1, E, D)$ and prefix-independent~$\wincond \subseteq V^\omega$ satisfying the following condition: the game $(\arena, \wincond \cap \safety(U))$ is determined for every $U \subseteq V$. We discuss this requirement in Section~\ref{sec_discussion}.

A ranking for $\game$ is a partial mapping~$r \colon V \dashrightarrow \omega$. The domain of $r$ is denoted by $\dom(r)$, its image by $\im(r)$. Let $r$ and $r'$ be two rankings. We say that $r'$ refines $r$ if $\dom(r') \supseteq \dom(r)$ and if $r'(v) \le r(v)$ for all $v \in \dom(r)$. A ranking~$r$ is sound, if we have $r(v) = 0$ if and only if $v \in \winreg_1(\game)$ (cf.~Lemma~\ref{lemma_winningregionsvsresilience}). 

Let $r$ be a ranking for $\game$. We define the ranking~$r'$ as 
\[
r'(v) = \min \bigl( \set{r(v)} \cup \set{r(v')+1 \mid v' \in \dom(r) \text{ and } (v,v') \in D} \bigr),
\]
where $\set{r(v)} = \emptyset$ if $v \notin \dom(r)$, and $\min \emptyset$ is undefined (causing $r'(v)$ to be undefined). We call $r'$ the \emph{disturbance update} of $r$.

\begin{lemma}
\label{lemma_disturbanceupdateproperties}
The disturbance update~$r'$ of a sound ranking~$r$ is sound and refines~$r$. 
\end{lemma}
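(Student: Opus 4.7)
The plan is to verify the two conclusions of the lemma separately, both directly from the definition of the disturbance update, and both essentially by unfolding the definition.

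For \emph{refinement}, I would first show that $\dom(r') \supseteq \dom(r)$: for any $v \in \dom(r)$, the value $r(v)$ itself lies in the set $\set{r(v)} \cup \set{r(v')+1 \mid v' \in \dom(r), (v,v') \in D}$, so this set is nonempty and its minimum (which defines $r'(v)$) exists. The same observation immediately gives $r'(v) \le r(v)$ for all $v \in \dom(r)$, which is the other half of the refinement condition.

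For \emph{soundness}, I would argue both directions of the biconditional $r'(v) = 0 \iff v \in \winreg_1(\game)$. The easy direction is ($\Leftarrow$): if $v \in \winreg_1(\game)$, then soundness of $r$ gives $r(v) = 0$, so $v \in \dom(r) \subseteq \dom(r')$ and, by refinement, $r'(v) \le r(v) = 0$, hence $r'(v) = 0$. For the other direction, assume $r'(v) = 0$ and inspect which element of the set $\set{r(v)} \cup \set{r(v')+1 \mid v' \in \dom(r), (v,v') \in D}$ attains this minimum. An element of the form $r(v')+1$ is at least $1$ since $r(v') \in \omega$, so it cannot be $0$. The only remaining possibility is that $r(v)$ itself is defined and equal to $0$, which by soundness of $r$ yields $v \in \winreg_1(\game)$.

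Neither step presents a genuine obstacle; the only subtlety worth flagging carefully is the convention that $\set{r(v)} = \emptyset$ when $v \notin \dom(r)$ and that $\min \emptyset$ is undefined, which is what keeps the disturbance update from spuriously extending the domain to vertices without any relevant disturbance edge. As long as this convention is handled cleanly in the two case distinctions above, the proof is short and mechanical.
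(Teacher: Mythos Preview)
Your proposal is correct and essentially matches the paper's proof: both establish refinement by observing that the minimization defining $r'(v)$ ranges over a set containing $r(v)$, and both establish soundness by noting that the terms of the form $r(v')+1$ are strictly positive. The only cosmetic difference is that the paper argues the $(\Rightarrow)$ direction of soundness contrapositively (showing $v\in\winreg_0(\game)$ implies $r'(v)>0$, tacitly using determinacy of $\game$), whereas you argue it directly; your version is arguably cleaner for exactly the reason you flag in your final paragraph.
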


\begin{proof}
As the minimization defining $r'(v)$ ranges over a superset of $\set{r(v)}$, we have $r'(v) \le r(v)$ for every $v \in \dom(r)$. This immediately implies refinement. From this inequality, we also obtain $r'(v) = 0$ for every $v \in \winreg_1(\game)$, due to soundness of $r$. Finally, consider some $v \in \winreg_0(\game)$. Then, $r(v) >0$ by soundness of $r$. Thus, $r'(v) > 0$ as well, as both $r(v)$ and each $r(v') + 1$ are greater than zero. Altogether, $r'$ is sound as well. \qed 
\end{proof}
Again, let $r$ be a ranking for $\game$. For every $k \in \im(r)$ let
\[
A_k = \winreg_1(\arena, \wincond \cap \safety(\set{v \in \dom(r) \mid r(v) \le k}))
\]
be the winning region of Player~$1$ in the game where he either wins by reaching a vertex~$v$ with $r(v) \le k$ or by violating the winning condition of $\game$. 
Now, define $r'(v) = \min \set{k \mid v \in A_k}$, where $\min \emptyset$ is again undefined. We call $r'$ the \emph{risk update} of $r$.

\begin{lemma}
\label{lemma_riskupdateproperties}
The risk update~$r'$ of a sound ranking~$r$ is sound and refines $r$. 
\end{lemma}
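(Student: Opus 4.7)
My plan is to verify the two conclusions---refinement and soundness---separately, mirroring the proof of Lemma~\ref{lemma_disturbanceupdateproperties} but with an extra strategy-pasting step for the harder direction of soundness.

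For refinement, I fix $v \in \dom(r)$ and set $k = r(v)$, so that $v$ itself lies in the unsafe set $\set{u \in \dom(r) \mid r(u) \le k}$. Every play starting in $v$ therefore violates $\safety(\cdot)$ immediately, so Player~1 trivially wins the augmented game $(\arena, \wincond \cap \safety(\set{u \in \dom(r) \mid r(u) \le k}))$ from $v$, whence $v \in A_k$. Consequently $r'(v)$ is defined and at most $k = r(v)$, which yields both $\dom(r) \subseteq \dom(r')$ and $r'(v) \le r(v)$.

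For soundness I handle both directions. If $v \in \winreg_1(\game)$, then soundness of $r$ forces $r(v) = 0$, so $0 \in \im(r)$ and the unsafe set for $k=0$ is exactly $\winreg_1(\game)$; any winning strategy for Player~1 in $\game$ from $v$ already wins the augmented game (since violating $\wincond$ alone is sufficient), giving $v \in A_0$ and hence $r'(v) = 0$. Conversely, assume $r'(v) = 0$. Then $0 \in \im(r)$, the unsafe set for $k=0$ equals $\winreg_1(\game)$ by soundness of $r$, and $v \in A_0$. I would construct a winning strategy for Player~1 in $\game$ from $v$ by pasting: play according to Player~1's winning strategy in the augmented game until the play first visits some $u \in \winreg_1(\game)$, and from that point on switch to a winning strategy for Player~1 in $\game$ from $u$. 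Any consistent play either never visits $\winreg_1(\game)$, in which case the augmented strategy forces it to violate $\wincond$, or it enters $\winreg_1(\game)$ at some $u$, after which the suffix violates $\wincond$ and prefix-independence of $\wincond$ lifts this to the whole play. Either way $v \in \winreg_1(\game)$.

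The only non-routine ingredient is the strategy composition in the second implication; it is the standard pasting argument, with the essential use of prefix-independence being to discard the finite prefix produced before the switch. Notably, no determinacy of $\game$ itself or of its safety-augmented variants is required here---determinacy only enters the overall development via Lemma~\ref{lemma_winningregionsvsresilience} when translating soundness into statements about resilience.
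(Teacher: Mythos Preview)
Your argument is correct and matches the paper's proof: the same trivial-membership observation $v \in A_{r(v)}$ for refinement, and the same strategy-pasting construction for the implication $r'(v) = 0 \Rightarrow v \in \winreg_1(\game)$, which the paper merely phrases as a contradiction starting from $v \in \winreg_0(\game)$ (the paper also derives the forward direction of soundness from refinement rather than arguing it separately as you do, but this is cosmetic). One caveat on your closing remark: while determinacy is indeed not needed for \emph{this} lemma, it enters the overall development beyond Lemma~\ref{lemma_winningregionsvsresilience}---in the proof of Lemma~\ref{lemma_rank_correctness} the existence of Player~$0$'s winning strategies for the games~$\game_v$ (and for~$\game_X$) is obtained from the stability of~$r^*$ under the risk update precisely via determinacy of the safety-augmented games.
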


\begin{proof}
We show $r'(v) \le r(v)$ for every $v \in \dom(r)$, which implies both refinement and $r'(v) = 0$ for every $v \in \winreg_1(\game)$, as argued in the proof of Lemma~\ref{lemma_disturbanceupdateproperties}.

Thus, let $v \in \dom(r)$. Trivially, $v \in \set{v' \in \dom(r) \mid r(v') \le r(v)}$. Thus, Player~$1$ wins the game
$(\arena, \wincond \cap \safety(\set{v' \in \dom(r) \mid r(v') \le r(v) }))$ from $v$ by violating the safety condition right away. Hence, $v \in A_{r(v)}$ and thus $r'(v) \le r(v)$.

To complete the proof of soundness of $r'$, we just have to show $r'(v) > 0$ for every $v \in \winreg_0(\game)$. Towards a contradiction, assume $r'(v) = 0$, i.e., $v \in A_0$. Thus, Player~$1$ has a strategy~$\tau$ from $v$ that ensures that either the winning condition is violated or that a vertex~$v'$ with $r(v') = 0$ is reached, i.e., $v' \in \winreg_1(\game)$ by soundness of $r$. Hence, Player~$1$ has a winning strategy~$\tau_{v'}$ for $\game$ from every such $v'$. This implies that he also has a winning strategy from~$v$:  play according to $\tau$ until a vertex~$v'$ with $r(v')=0$ is reached. From there, mimic $\tau_{v'}$ when starting from $v'$. Every resulting disturbance-free play has a suffix that violates the winning condition~$\wincond$. Thus, by prefix-independence, the whole play violates $\wincond$ as well, i.e., it is winning for Player~$1$. Thus, $v \in \winreg_1(\game)$, which yields the desired contradiction, as winning regions are always disjoint.\qed 
\end{proof}

Let $r_0$ be the unique sound ranking with domain~$\winreg_1(\game)$, i.e., $r_0$ maps exactly the vertices in Player~$1$'s winning region to zero, all others are undefined. Starting with $r_0$, we inductively define a sequence of rankings~$(r_j)_{j \in \omega}$ such that $r_{j}$ for an odd (even) $j >0$ is the disturbance (risk) update of $r_{j-1}$, i.e., we alternate between disturbance and risk updates.

Due to refinement, the $r_j$ eventually stabilize, i.e., there is some $j_0$ such that $r_j = r_{j_0}$ for all $j \ge j_0$. Define $r^* = r_{j_0}$. Due to $r_0$ being sound and by Lemma~\ref{lemma_disturbanceupdateproperties} and Lemma~\ref{lemma_riskupdateproperties}, each $r_j$, and $r^*$ in particular, is sound. If $v \in \dom(r^*)$, let $j_v$ be the minimal $j$ with $v \in \dom(r_j)$; otherwise, $j_v$ is undefined.

\begin{lemma}
\label{lemma_ranktermination}
If $v \in \dom(r^*)$, then $r_{j_v}(v) = r_j(v)$ for all $j \ge j_v$.
\end{lemma}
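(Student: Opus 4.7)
The plan is to proceed by strong induction on $j$, strengthening the statement to: for every $j$ and every $v \in \dom(r_j)$, we have $r_j(v) = r_{j_v}(v)$. The base case $j = 0$ is immediate, since $\dom(r_0) = \winreg_1(\game)$, each such $v$ has $j_v = 0$, and $r_0(v) = 0$ by definition.

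For the inductive step, let $v \in \dom(r_{j+1})$. If $v \notin \dom(r_j)$ then $j_v = j+1$ and the claim is vacuous, so assume $v \in \dom(r_j)$ and set $k := r_j(v) = r_{j_v}(v)$ (by the inductive hypothesis). Refinement from Lemmas~\ref{lemma_disturbanceupdateproperties} and~\ref{lemma_riskupdateproperties} yields $r_{j+1}(v) \leq k$, so the remaining task is to establish $r_{j+1}(v) \geq k$ by a case distinction on the type of the update. In the disturbance case, $r_{j+1}(v) = \min(\set{k} \cup \set{r_j(u) + 1 \mid (v, u) \in D,\, u \in \dom(r_j)})$, so it suffices to show $r_j(u) \geq k - 1$ for every such $u$. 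I would argue by contradiction: if some such $u$ witnesses $r_j(u) = r \leq k - 2$, then by the inductive hypothesis $u$ has had rank $r$ since step $j_u$, and an auxiliary image-bound (proved by a one-line induction on $j'$, giving $\max \im(r_{j'}) \le \lceil j'/2 \rceil$) together with $r_{j_v}(v) = k$ forces $j_v \ge 2k - 1$. A parallel argument on $u$'s disturbance chain shows that $j_u \le 2r \le 2k - 4$, placing $u$ already in $\dom(r_{j_v - 2})$. But then the disturbance update on the way to $r_{j_v}$ would have ranked $v$ at most $r + 1 \leq k - 1$ via the edge $(v, u) \in D$, contradicting the minimality of $j_v$ together with $r_{j_v}(v) = k$.

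In the risk case, we need $v \notin A_{k-1}$ with respect to $r_j$: any witnessing Player~$1$ strategy in the safety subgame uses only standard edges, and by the inductive stability of the rank of every vertex it reaches would already witness $v \in A_{k-1}$ with respect to $r_{j_v}$, contradicting $r_{j_v}(v) = k$. The main obstacle I expect lies here, namely that the unsafe set $\set{y \mid r_j(y) \le k - 1}$ can be strictly larger than its counterpart at step $j_v$ because new low-rank vertices may have been added in the meantime, potentially expanding Player~$1$'s winning region. Disposing of this cleanly will require a downward argument: any such newly added vertex has first-rank less than $k$ and hence, by the inductive hypothesis applied to it, can be traced back (through its own disturbance chain or safety witness) to vertices already present at step $j_v$, so that no genuinely new threat arises. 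Once both cases are established, we obtain $r_{j+1}(v) = k$ and the induction closes.
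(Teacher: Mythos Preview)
Your disturbance case contains a real gap. From the image bound $\max \im(r_{j'}) \le \lceil j'/2\rceil$ and $r_{j_u}(u)=r$ you obtain $j_u \ge 2r-1$, a \emph{lower} bound on the entry time of $u$; you then assert the opposite inequality $j_u \le 2r$ via ``a parallel argument on $u$'s disturbance chain'', but no such argument is available from your hypothesis. Your inductive hypothesis only says that ranks, once assigned, never change; it says nothing about \emph{when} a low-rank vertex first entered the domain. Without an upper bound on $j_u$ in terms of $r$ you cannot place $u$ in $\dom(r_{j_v-2})$, and the contradiction does not close. The risk case has the same defect, as you yourself flag: new vertices of rank $\le k-1$ may have been added after step $j_v$, enlarging the unsafe set, and ``tracing them back through their own disturbance chain'' again presupposes exactly the rank--versus--entry-time control that your hypothesis does not supply.

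The paper resolves this by strengthening the invariant so that rank and entry time determine each other: it proves that $r_{j_v}(v)$ equals $(j_v+1)/2$ when $j_v$ is odd and $j_v/2$ when $j_v$ is even. The upper bound is your image bound; the matching lower bound is obtained by showing, by induction on $j$, that no vertex is ever updated at step $j$ to a value below $\lceil j/2\rceil$. With this two-sided formula in hand, ``rank $\le r$'' immediately gives ``entered by step $2r$'', which is precisely the missing ingredient in both of your cases. Your overall plan is salvageable, but only after adopting this stronger invariant; with merely the stability statement as hypothesis the induction does not go through.
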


\begin{proof}
We show the following stronger result for every $v \in \dom(r^*)$:
\begin{itemize}
	\item If~$j_v$ is odd, then $r_j(v) = \frac{j_v+1}{2}$ for every $j \ge j_v$.
	\item If~$j_v$ is even, then $r_j(v) = \frac{j_v}{2}$ for every $j \ge j_v$.
\end{itemize}

The disturbance update increases the maximal rank by at most one and the risk update does not increase the maximal rank at all. Furthermore, due to refinement, the rank of $v$ is set and then it cannot increase. Hence, we obtain $r_j(v) \leq \frac{j_v+1}{2}$ and $r_j(v) \leq \frac{j_v}{2}$ for odd and even~$j_v$, respectively. In the remainder of the proof, we show a matching lower bound.

We say that a vertex~$v$ is updated to~$k \in \omega$ in~$r_j$ if~$r_j(v) = k$ and either~$v \notin \dom(r_{j-1})$ or both~$v \in \dom(r_{j-1})$ and~$r_{j-1}(v) \neq k$ (here, $r_{-1}$ is the unique ranking with empty domain). Note that as part of the proof, we have to show that the second case never occurs.

Now, we show the following by induction over~$j$, which implies the matching lower bound.
\begin{itemize}
	\item If~$j$ is odd, then no~$v$ is updated in~$r_j$ to some~$k < \frac{j+1}{2}$.
	\item If~$j$ is even, then no~$v$ is updated in~$r_j$ to some~$k < \frac{j}{2}$.
\end{itemize}
	
	For~$j = 0$, we have~$\frac{j}{2} = 0$, and
	clearly, no vertex is assigned a negative rank by~$r_0$.
	For~$j = 1$ and~$j' = 2$, we obtain~$\frac{j+1}{2} = \frac{j'}{2} = 1$. As $r_0$, $r_1$, and $r_2$ are sound, neither~$r_1$ nor~$r_2$ update some~$v$ to zero.
	
	Now, let~$j > 2$ and first consider the case where~$j$ is odd.
	Towards a contradiction, assume that~$v \in V$ is updated in~$r_j$ to some value less than~$\frac{j+1}{2}$.
	Since~$j$ is odd,~$r_j$ is the disturbance update of~$r_{j-1}$. Further, as $v$ is updated in $r_j$, there exists some disturbance edge $(v, v') \in D$ such that $r_j(v) = r_{j-1}(v') + 1$.
	Thus, $r_{j-1}(v') < r_j(v) < \frac{j+1}{2}$, i.e., $r_{j-1}(v') \leq \frac{j+1}{2} - 2 = \frac{j-3}{2}$.
	First, we show~$r_{j-3}(v') = r_{j-2}(v') = r_{j-1}(v')$, i.e., the rank of $v'$ is stable during the last two updates.
	
	First assume towards a contradiction $r_{j-2}(v') \neq r_{j-1}(v')$.
	Then,~$v'$ is updated in~$r_{j-1}$ to some rank of at most~$\frac{j-3}{2}$, which is in turn smaller than~$\frac{j-1}{2}$, violating the induction hypothesis for~$j-1$.
	Hence, $r_{j-2}(v') = r_{j-1}(v')$.
	The same reasoning yields a contradiction to the assumption  $r_{j-3}(v') \neq r_{j-2}(v')$.
	Thus, we indeed obtain $r_{j-3}(v') = r_{j-2}(v') = r_{j-1}(v')$.
	
	Since~$r_{j-2}$ is the disturbance update of~$r_{j-3}$, we obtain~$r_{j-2}(v) \leq r_{j-3}(v') + 1 = r_{j-1}(v') + 1 = r_j(v)$.
	Due to refinement, we obtain $r_{j-2}(v) \geq r_{j}(v)$, i.e., altogether $r_{j-2}(v) = r_{j-1}(v) = r_{j}(v)$.
	The latter equality contradicts our initial assumption, namely~$v$ being updated in~$r_j$ to $r_j(v)$.
	
	Now, consider the case where~$j$ is even. Again, assume towards a contradiction that~$v \in V$ is updated in~$r_j$ to some value less than~$\frac{j}{2}$.
	Since~$j$ is even,~$r_j$ is the risk update of~$r_{j-1}$. Further, as $v$ is updated in $r_j$, Player~$1$ wins the game~$(\arena, \wincond \cap \safety(U))$ from $v$, where $U = \set{v' \in \dom(r_{j-1}) \mid r_{j-1}(v') \leq r_j(v)}$. Hence, he has a strategy~$\tau$ such that every play starting in~$v$  and consistent with~$\tau$ either violates~$\wincond$ or eventually visits some vertex~$v'$ with~$r_{j-1}(v') \leq r_j(v)$.
	We claim $r_{j-2}(v') = r_{j-1}(v')$ for all~$v' \in U$.
	
	Towards a contradiction, assume $r_{j-2}(v') \neq r_{j-1}(v')$ for some~$v' \in U$.
	Note that we have~$r_{j-1}(v') \leq r_j(v) < \frac{j}{2}$.
	Thus,~$v'$ is updated in~$r_{j-1}$ to some value strictly less than~$\frac{j}{2}$, which contradicts the induction hypothesis for~$j-1$.
	Hence, we indeed obtain $r_{j-2}(v') = r_{j-1}(v')$ for all~$v' \in U$.
	
	Thus, there are two types of vertices~$v'$ in~$U$:
	those for which~$r_{j-3}(v')$ is defined, which implies~$r_{j-3}(v') = r_{j-1}(v')$ due to the induction hypothesis and refinement, and those where~$r_{j-3}(v')$ is undefined, which implies $r_{j-2}(v') = r_{j-1}(v')$ due to the claim above.
	
	We claim that Player~$1$ wins $(\arena, \wincond \cap \safety(\set{v'' \in \dom(r_{j-3}) \mid r_{j-3}(v'') \leq r_j(v)}))$ from $v$, which implies $r_{j-2}(v) = r_j(v)$. This contradicts $v$ being updated in $r_j$, our initial assumption. 
	
	To this end, we construct a strategy~$\tau'$ from $v$ that either violates $\wincond$ or reaches a vertex~$v''$ with $r_{j-3}(v'')\le r_j(v)$ as follows. From $v$, $\tau'$ mimics $\tau$ until a vertex~$v'$ in $U$ is reached (if it is at all). If $v'$ is of the first type, then we have $r_{j-3}(v') = r_{j-1}(v') \le r_j(v)$. If $v'$ is of the second type, then $v'$ is updated in $r_{j-2}$ to some rank~$r_{j-2}(v') = r_{j-1}(v') \le r_j(v)$. As $r_{j-2}$ is the risk update of $r_{j-3}$, Player~$1$ has a strategy~$\tau_{v'}$ from $v'$ that either violates $\wincond$ or reaches a vertex~$v''$ with $r_{j-3}(v'') \le r_{j-2}(v') \le r_j(v)$. Thus, starting in $v'$, $\tau'$ mimics $\tau_{v'}$ from $v'$ until such a vertex is reached (if it is reached at all). Thus, every play that starts in $v$ and is consistent with $\tau'$ either violates $\wincond$ (as it has a suffix that does) or reaches a vertex~$v''$ with $r_{j-3}(v'') \le r_j(v)$, which proves our claim.\qed 
	\end{proof}
Lemma~\ref{lemma_ranktermination} implies that an algorithm computing the $r_j$ does not need to implement the definition of the two updates as presented above, but can be optimized by taking into account that a rank is never updated once set. However, for the proofs below, the definition presented above is more expedient, as it gives stronger preconditions to rely on, e.g., Lemma~\ref{lemma_disturbanceupdateproperties} and \ref{lemma_riskupdateproperties} only hold for the definition presented above.

Also, from the proof of Lemma~\ref{lemma_ranktermination}, we obtain an upper bound on the maximal rank of $r^*$. This in turn implies that the $r_j$ stabilize quickly, as $r_j = r_{j+1} = r_{j+2}$ implies $r_j = r^*$.
 
\begin{corollary}
\label{corollary_computationprops}
We have $\im(r^*) = \set{0, 1, \ldots, n}$ for some $n < \size{V}$ and $r^* = r_{2\size{V}}$.
\end{corollary}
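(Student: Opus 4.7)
The plan is to first show that $\im(r_j)$ is a downward-closed initial segment of $\omega$ at every step, then bound the maximum rank by $\size{V}-1$ via a simple counting argument, and finally invoke Lemma~\ref{lemma_ranktermination} to argue that the sequence has stabilized by step~$2\size{V}$. The key observations are: by Lemma~\ref{lemma_ranktermination}, ranks of vertices already in the domain are preserved under either update, so $\im(r_j) \supseteq \im(r_{j-1})$; the disturbance update at odd $j$ assigns newly ranked vertices values of the form $r_{j-1}(v')+1$ with $v' \in \dom(r_{j-1})$, hence in $\set{k+1 \mid k \in \im(r_{j-1})}$; and the risk update at even $j$ assigns newly ranked vertices values in $\im(r_{j-1})$ directly, since the minimization defining it ranges over exactly that set. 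An easy induction on $j$ starting from $\im(r_0) = \set{0}$ (the case $\winreg_1(\game) = \emptyset$ is handled separately: here $r^*$ is vacuously empty and the claim is trivial) then yields $\im(r_j) = \set{0, 1, \ldots, \ell_j}$ for every $j$, and hence $\im(r^*) = \set{0, 1, \ldots, n}$ for $n = \max\im(r^*)$.

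For the bound $n < \size{V}$, I would observe that every rank in $\set{0, 1, \ldots, n}$ is attained, so $n + 1 = \card{\im(r^*)} \le \card{\dom(r^*)} \le \size{V}$. For the equality $r^* = r_{2\size{V}}$, Lemma~\ref{lemma_ranktermination} yields $j_v \in \set{2r^*(v) - 1, 2r^*(v)}$, hence $j_v \le 2n$, for every $v \in \dom(r^*)$. Since the domains $\dom(r_j)$ form a monotone chain with limit $\dom(r^*)$, this gives $\dom(r_j) = \dom(r^*)$ for every $j \ge 2n$, and another application of Lemma~\ref{lemma_ranktermination} then gives $r_j = r^*$ for every such~$j$. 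Since $2\size{V} > 2n$, we conclude $r_{2\size{V}} = r^*$.

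I expect the main obstacle to be a clean treatment of the inductive argument for contiguity of the image, particularly the risk-update case: the fact that no new rank appears hinges on the subtle detail that the minimization defining the risk update ranges over $\im(r)$ rather than all of $\omega$, so one must carefully track what the definition of $A_k$ permits. Once contiguity of the image is in hand, the remaining steps reduce to counting and direct applications of Lemma~\ref{lemma_ranktermination}.
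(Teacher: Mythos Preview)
Your proposal is correct and follows essentially the same route the paper intends: the corollary is stated without its own proof, with the preceding sentence pointing to the stronger result established inside the proof of Lemma~\ref{lemma_ranktermination} (namely that a vertex first ranked at step~$j_v$ receives rank $\frac{j_v+1}{2}$ or $\frac{j_v}{2}$ depending on the parity of~$j_v$), from which the bound on the maximal rank and the stabilization index follow directly. Your inductive argument for contiguity of $\im(r_j)$ is a clean way of spelling out what the paper leaves implicit; just be precise that the relation $j_v \in \set{2r^*(v)-1, 2r^*(v)}$ comes from the \emph{proof} of Lemma~\ref{lemma_ranktermination} rather than from its statement.
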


The main result of this section shows that $r^*$ characterizes the resilience of vertices of finite resilience.

\begin{lemma}
\label{lemma_rank_correctness}
Let $r^*$ be defined for $\game$ as above, and let $v \in V$.
\begin{enumerate}
	\item\label{lemma_rank_correctness_finite} If $v \in \dom(r^*)$, then $r_\game(v) = r^*(v)$.
	\item\label{lemma_rank_correctness_infinite} If $v \notin \dom(r^*)$, then $r_\game(v) \in \set{\omega, \omega+1}$.
\end{enumerate}
\end{lemma}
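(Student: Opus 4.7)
The plan is to prove both items together by analysing the stabilised ranking~$r^*$. For Item~\ref{lemma_rank_correctness_finite} I show $r_\game(v) \leq r^*(v)$ by strong induction on the stage~$j_v$ at which~$v$ first enters the domain; the matching lower bound together with Item~\ref{lemma_rank_correctness_infinite} will then follow from a single composite Player~$0$ strategy~$\sigma^*$ built from uniform winners of the safety-enhanced subgames appearing in the risk update.

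For the upper bound, the base case~$j_v = 0$ gives $v \in \winreg_1(\game)$ and hence $r_\game(v) = 0$. If~$j_v$ is odd, the disturbance update supplies a disturbance edge~$(v, v')$ with $r_{j_v - 1}(v') = r^*(v) - 1$ and $j_{v'} < j_v$, so the induction hypothesis gives $r_\game(v') \leq r^*(v) - 1$; for any Player~$0$ strategy~$\sigma$, prepending the edge~$(v, v')$ to a losing continuation from~$v'$ consistent with~$\sigma$ yields, using prefix-independence of~$\wincond$, a losing play from~$v$ with at most~$r^*(v)$ disturbances. If~$j_v$ is even, the risk update provides a Player~$1$ strategy that forces either an immediate violation of~$\wincond$ or a visit to some $v' \in \dom(r_{j_v - 1})$ with $r_{j_v - 1}(v') \leq r^*(v)$; in the latter case $j_{v'} < j_v$ and $r^*(v') \leq r_{j_v - 1}(v') \leq r^*(v)$ by refinement, so the induction hypothesis combined with prefix-independence again produces, against any~$\sigma$, a losing play consistent with~$\sigma$ using at most~$r^*(v)$ disturbances.

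For the lower bound and for Item~\ref{lemma_rank_correctness_infinite} I exploit stability of~$r^*$. Stability under the disturbance update implies that every disturbance out of a rank-$m$ vertex either lands in~$\dom(r^*)$ at rank at least~$m - 1$ or leaves~$\dom(r^*)$, and that disturbances out of~$V \setminus \dom(r^*)$ never enter~$\dom(r^*)$. Stability under the risk update together with the determinacy hypothesis yields, for each~$k \in \im(r^*)$, a uniform Player~$0$ winning strategy~$\sigma^{(k)}$ for the game $(\arena, \wincond \cap \safety(\{u \in \dom(r^*) : r^*(u) \leq k\}))$, winning from every~$u$ with $u \notin \dom(r^*)$ or $r^*(u) > k$. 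I combine these into the composite strategy~$\sigma^*$ that at a vertex~$u \in \dom(r^*)$ with $r^*(u) \geq 1$ plays~$\sigma^{(r^*(u) - 1)}(u)$ and at a vertex~$u \notin \dom(r^*)$ plays~$\sigma^{(M)}(u)$ for $M = \max \im(r^*)$.

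The main technical obstacle is proving that~$\sigma^*$ is $r^*(v)$-resilient from every $v \in \dom(r^*)$ and $\omega$-resilient from every $v \notin \dom(r^*)$. The key invariant along any play of~$\sigma^*$ is that after~$i$ steps with~$d_i$ disturbances starting from~$v$ with $r^*(v) = m$, the current vertex~$v_i$ satisfies either $v_i \notin \dom(r^*)$ or $r^*(v_i) \geq m - d_i$. The two stability observations propagate this invariant: standard moves from~$v_i$ stay in the winning region of the currently consulted~$\sigma^{(r^*(v_i) - 1)}$ (or~$\sigma^{(M)}$), hence preserve rank or leave~$\dom(r^*)$, while disturbances drop the rank by at most one. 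For fewer than~$m$ total disturbances the invariant keeps the play either outside~$\dom(r^*)$ or at strictly positive rank throughout; along any disturbance-free tail the ranks are non-decreasing in the finite set~$\im(r^*)$, so the play either exits~$\dom(r^*)$ permanently or its ranks stabilise at some~$r^\infty \geq 1$. In either case the infinite tail is consistent with a single~$\sigma^{(k)}$ executed from a vertex in its winning region and so satisfies~$\wincond$; prefix-independence of~$\wincond$ then transfers winning to the whole play.
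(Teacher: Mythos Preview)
Your upper-bound argument for Item~\ref{lemma_rank_correctness_finite} (strong induction on~$j_v$) is a clean repackaging of the paper's explicit play-prefix construction and is correct; the two are essentially the same proof.

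For the lower bound and for Item~\ref{lemma_rank_correctness_infinite}, your approach is genuinely different from the paper's, but as written it relies on an assumption that is \emph{not} available at this point. You apply each~$\sigma^{(k)}$ to a single vertex~$u$, i.e., you treat the~$\sigma^{(k)}$ as positional. The hypotheses in force for this lemma are only prefix-independence of~$\wincond$ and determinacy of all~$(\arena,\wincond\cap\safety(U))$; positional winning strategies for these games are first assumed in Subsection~\ref{subsec_strategies} for Theorem~\ref{theorem_main}. Without positionality your key step fails: a general uniform winning strategy~$\sigma^{(k)}$ need not move into~$\winreg_0(\game_{U_k})$ after a history that is not consistent with~$\sigma^{(k)}$, so neither ``standard moves preserve rank'' nor ``the stabilised tail is a winning play of~$\sigma^{(r^\infty-1)}$'' is justified. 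The paper avoids this by building a history-dependent strategy~$\sigmaf$ (and~$\sigmaomega$) that mimics a fixed~$\sigma_{v_\cur}$ between consequential disturbances and \emph{resets} both~$v_\cur$ and the recorded history at each such disturbance; this works for arbitrary winning strategies and hence under the lemma's stated hypotheses.

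That said, under the extra positionality assumption your construction is correct and is in fact closer in spirit to the proof of Theorem~\ref{theorem_main} than to the paper's proof of this lemma: you directly produce a positional composite strategy, whereas the paper first proves the lemma with non-positional~$\sigmaf,\sigmaomega$ and only later, in Theorem~\ref{theorem_main}, combines positional subgame winners via a linear order and reachability sets~$R_v$ into a single positional optimally resilient strategy. Your rank-based selector~$u\mapsto\sigma^{(r^*(u)-1)}(u)$ is a pleasant alternative to the paper's~$u\mapsto\sigma_{m(u)}(u)$; it buys a shorter path to a positional strategy, at the cost of importing the positionality hypothesis one subsection too early.
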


\begin{proof}
\ref{lemma_rank_correctness_finite}.)
We show $r_\game(v) \leq r^*(v)$ and $r_\game(v) \geq r^*(v)$.

\textbf{\myquot{\boldmath$r_\game(v) \leq r^*(v)$}:} An $\alpha$-resilient strategy from $v$ is also $\alpha'$-resilient from $v$ for every $\alpha' \le \alpha$. Thus, to prove 
\begin{multline*}
r_\game(v) = \sup\{ \alpha\in\omega+2  \mid \\ \text{Player~$0$ has an $\alpha$-resilient strategy for $\game$ from $v$}\}  \leq r^*(v) 
\end{multline*}
we just have to show that Player~$0$ has no $(r^*(v)+1)$-resilient strategy from~$v$. By definition, for every strategy~$\sigma$ for Player~$0$, we have to show that there is a play~$\rho$ starting in $v$ and consistent with~$\sigma$ that has at most $r^*(v)$ disturbances and is winning for Player~$1$. So, fix an arbitrary strategy~$\sigma$.

 We  define a play with the desired properties by constructing longer and longer finite prefixes before finally appending an infinite suffix. During the construction, we ensure that each such prefix ends in $\dom(r^*)$ in order to be able to proceed with our construction.

The first prefix just contains the starting vertex~$(v,0)$, i.e., the prefix does indeed end in $\dom(r^*)$. Now, assume we have produced a prefix~$w(v',b')$ ending in some vertex~$v' \in \dom(r^*)$, which implies that $j_{v'}$ is defined. We consider three cases:

If $j_{v'} = 0$, then $v' \in \winreg_1(\game)$ by definition of $r_0$, i.e., Player~$1$ has a winning strategy~$\tau$ from $v$. Thus, we extend $w(v',b')$ by the unique disturbance-free play that starts in $v'$ and is consistent with $\sigma$ and $\tau$, without its first vertex. In that case, the construction of the infinite play is complete.

Second, if $j_{v'} > 0$ is odd, then $v'$ received its rank~$r^*(v')$ during a disturbance update. Hence, there is some $v''$ such that $(v',v'') \in D$ with $r^*(v') -1 = r^*(v'') $. In this case, we extend $w(v',b')$ by such a vertex~$v''$ to obtain the new prefix~$w(v',b')(v'',1)$, which satisfies the invariant, as $v''$ is in $\dom(r^*)$. Further, we have $j_{v''} < j_{v'}$ as the rank of $v''$ had to be defined in order to be considered during the disturbance update assigning a rank to $v'$.

Finally, if $j_{v'} > 0$ is even, then $v'$ received its rank~$r^*(v')$ during a risk update. We claim that Player~$1$ has a strategy~$\tau_{v'}$ that guarantees one of the following outcomes from $v'$: either the resulting play violates $\wincond$ or it encounters a vertex~$v''$ that satisfies $r^*(v'') \leq r^*(v')$ and $j_{v''} < j_{v'}$ (which implies $v'' \neq v'$). 		
	
	In that case, consider the unique disturbance-free play~$\rho'$ that starts in $v'$ and is consistent with $\sigma$ and the strategy~$\tau_{v'}$ as above. If $\rho'$ violates $\wincond$, then we extend $w(v',b')$ by $\rho'$ without its first vertex. In that case, the construction of the infinite play is complete.
	
	 If $\rho'$ does not violate $\wincond$, then we extend $w(v',b')$ by the prefix of $\rho'$ without its first vertex and up to (and including) the first occurrence of a vertex~$v''$ in $\rho'$ satisfying the properties described above. Note that this again satisfies the invariant.
	 
	It remains to argue our claim: $v'$ was assigned its  rank~$r^*(v') = r_{j_{v'}}(v')$ because it is in Player~$1$'s winning region in the game with winning condition~$\wincond \cap \safety(U)$, for 
	 \[U = \set{v'' \in \dom(r_{j_{v'}-1}) \mid r_{j_{v'}-1}(v'') \le r_{j_{v'}}(v')}.\]
	 	  Hence, from~$v'$, Player~$1$ has a strategy to either violate the winning condition or to reach~$U$. Thus, $r_{j_{v'}-1}(v'') = r^*(v'')$ for every $v'' \in \dom(r_{j_{v'}-1})$ yields $r^*(v'') \leq r^*(v')$. Finally, we have $j_{v''} < j_{v'}$, as the rank of $v'$ is assigned due to vertices in $U$ already having ranks. 

Note that only in two cases, we extend the prefix to an infinite play. In the other two cases, we just extend the prefix to a longer finite one. Thus, we first show that this construction always results in an infinite play. To this end, let $w_0(v_0,b_0)$ and $w_1 (v_1,b_1)$ be two of the prefixes constructed above such that $w_1(v_1,b_1)$ is an extension of $w_0(v_0,b_0)$. A simple induction proves $j_{v_1} < j_{v_0}$. Hence, as the value can only decrease finitely often, at some point an infinite suffix is added. Thus, we indeed construct an infinite play.

Finally, we have to show that the resulting play has the desired properties: by construction, the play starts in $v$ and is consistent with $\sigma$. Furthermore, by construction, it has a disturbance-free suffix that violates $\wincond$. Thus, by prefix-independence, the whole play also violates $\wincond$. It remains to show that it has at most $r^*(v)$ disturbances. To this end, let $w_0(v_0,b_0)$ and $w_1 (v_1,b_1)$ be two of the prefixes such that $w_1 (v_1,b_1)$ is obtained by extending $w_0(v_0,b_0)$ once. If the extension consists of taking the disturbance edge~$(v_0, v_1) \in D$, then we have $r^*(v_1) = r^*(v_0)+1$. The only other possibility is the extension consisting of a finite play prefix that is consistent with the strategy~$\tau_{v_0}$. Then, by construction, we obtain $r^*(v_1) \le r^*(v_0)$.
So, there are at most $r^*(v)$ many disturbances in the play, as the current rank decreases with every disturbance edge and does not increase with the other type of extension, but is always non-negative.

\textbf{\myquot{\boldmath$r_\game(v) \geq r^*(v)$}:}  Here, we construct a strategy~$\sigmaf$ for Player~$0$ that is $r^*(v)$-resilient from every $v \in \dom(r^*)$, i.e., from $v$, $\sigmaf$ has to be winning even under $r^*(v)-1$ disturbances. As every strategy is $0$-resilient, we only have to consider those~$v$ with $r^*(v) >0$.

The proof is based on the fact that $r^*$ is both stable under the disturbance and under the risk update, i.e., the disturbance update and the risk update of $r^*$ are $r^*$, which yields the following properties.
Let $(v,v') \in D$ be a disturbance edge such that $r^*(v) > 0$. Then, we have $r^*(v') \ge r^*(v) -1$. Also, for every $v \in \dom(r^*)$ with $r^*(v) > 0$, Player~$0$ has a winning strategy~$\sigma_v$ from $v$ for the game~$\game_v = (\arena, \wincond \cap \safety(\set{v' \in \dom(r^*) \mid r^*(v') < r^*(v)}))$ (note the strict inequality). Here, we apply determinacy of $\game_v$, as the risk update is formulated in terms of Player~$1$'s winning region.

Now, we define $\sigmaf$ to always mimic a strategy~$\sigma_{v_{\cur}}$ for some $v_\cur \in \dom(r^*)$, which is initialized by the starting vertex. The strategy~$\sigma_{v_{\cur}}$ is mimicked until a consequential (w.r.t.\ $\sigma_{v_\cur}$) disturbance edge is taken, say by reaching $v'$. In that case, the strategy~$\sigmaf$ discards the history of the play constructed so far, updates~$v_\cur$ to $v'$, and begins mimicking $\sigma_{v'}$. This is repeated ad infinitum. 

Now, consider a play that starts in $\dom(r^*)$, is consistent with $\sigmaf$, and has less than $r^*(v)$ disturbances. The part up to the first consequential disturbance edge (if it exists at all) is consistent with~$\sigma_v$. Now, let $(v_0, v_0')$ be the corresponding disturbance edge. Then, we have $r^*(v_0) \geq r^*(v)$, as~$\sigma_v$ being a winning strategy for the safety condition never visits vertices with a rank smaller than~$r^*(v)$. Thus, we conclude $r^*(v_0') \ge r^*(v_0) -1 \ge r^*(v) -1$. Similarly, the part between the first and the second consequential disturbance edge (if it exists at all) is consistent with $\sigma_{v_0'}$. Again, if $(v_1, v_1')$ is the corresponding disturbance edge, then we have $r^*(v_1') \ge r^*(v_1) -1 \ge r^*(v) - 2$. Continuing this reasoning shows that less than $r^*(v)$ (consequential) disturbance edges lead to a vertex~$v'$ with $r^*(v') > 0$, as the rank is decreased by at most one for every disturbance edge. The suffix starting in this vertex is disturbance-free and consistent with $\sigma_{v'}$. Hence, the suffix satisfies $\wincond$, i.e., by prefix-independence, the whole play satisfies $\wincond$ as well. Thus, $\sigmaf$ is indeed $r^*(v)$-resilient from every $v \in \dom(r^*)$.

\ref{lemma_rank_correctness_infinite}.)
Let $X = V \setminus \dom(r^*)$. The disturbance update of $r^*$ being $r^*$ implies that every disturbance edge starting in $X$ leads back to $X$. 
Similarly, the  risk update of $r^*$ being $r^*$ implies 
$X = \winreg_0(\game_X)$ for $\game_X = (\arena, \wincond \cap \safety(V \setminus X))$.
Thus, from every $v \in X$, Player~$0$ has a strategy~$\sigma_v$ such that every disturbance-free play that starts in $v$ and is consistent with $\sigma_v$ satisfies the winning condition~$\wincond$ and never leaves $X$. Using these properties, we construct a strategy~$\sigmaomega$ that is $\omega$-resilient from each~$v \in X$. Thus, $r_\game(v) \in \set{\omega, \omega+1}$.

The definition of the strategy~$\sigmaomega$ here is similar to the one above yielding the lower bound on the resilience. Again, $\sigmaomega$ always mimics a strategy~$\sigma_{v_{\cur}}$ for some $v_\cur \in X$, which is initialized by the starting vertex. The strategy~$\sigma_{v_{\cur}}$ is mimicked until a consequential (w.r.t.\ $\sigma_{v_\cur}$) disturbance edge is taken, say by reaching the vertex~$v'$. In that case, the strategy~$\sigmaomega$ discards the history of the play constructed so far, updates~$v_\cur$ to $v'$, and begins mimicking $\sigma_{v'}$. This is repeated ad infinitum. 

Due to the properties of the disturbance edges and the strategies~$\sigma_v$, such a play never leaves $X$, even if disturbances occur. Furthermore, if only finitely many disturbances occur, then the resulting play has a disturbance-free suffix that starts in some $v' \in X$ and is consistent with $\sigma_{v'}$. As $\sigma_{v'}$ is winning from $v'$ in $\game_X$, this suffix satisfies $\wincond$. Hence, by prefix-independence of $\wincond$, the whole play also satisfies $\wincond$. Thus, $\sigmaomega$ is indeed an $\omega$-resilient strategy from every $v \in X$. \qed 
\end{proof}

Combining Corollary~\ref{corollary_computationprops} and Lemma~\ref{lemma_rank_correctness}, we obtain an upper bound on the resilience of vertices with finite resilience.

\begin{corollary}
\label{corollary_resiliencedomain}
We have $r_\game(V) \cap \omega = \set{0, 1, \ldots, n}$ for some $n < \size{V}$.	
\end{corollary}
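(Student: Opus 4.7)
The plan is to observe that the two cases of Lemma~\ref{lemma_rank_correctness} form a dichotomy which precisely identifies the finite-resilience vertices with $\dom(r^*)$. I would first argue that $r_\game(V)\cap\omega = \im(r^*)$, and then invoke Corollary~\ref{corollary_computationprops} to conclude that this image is a downward-closed initial segment of $\omega$ of size less than $|V|$.

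In more detail, for the first inclusion, let $\alpha \in r_\game(V)\cap\omega$, so that $\alpha = r_\game(v)$ for some $v \in V$ with $r_\game(v) \in \omega$. By Item~\ref{lemma_rank_correctness_infinite} of Lemma~\ref{lemma_rank_correctness}, vertices outside $\dom(r^*)$ have resilience in $\set{\omega,\omega+1}$, hence $v \in \dom(r^*)$, and Item~\ref{lemma_rank_correctness_finite} yields $r_\game(v) = r^*(v) \in \im(r^*)$. For the reverse inclusion, any value in $\im(r^*)$ is $r^*(v)$ for some $v \in \dom(r^*)$, which by Item~\ref{lemma_rank_correctness_finite} equals $r_\game(v)$ and is in $\omega$ since $r^*$ takes values in $\omega$. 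Thus $r_\game(V) \cap \omega = \im(r^*)$.

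Applying Corollary~\ref{corollary_computationprops}, which gives $\im(r^*) = \set{0,1,\ldots,n}$ for some $n < |V|$, closes the proof. The argument is essentially a bookkeeping combination of the two previous results, so I do not anticipate any obstacle; the only step requiring a moment's thought is verifying that Item~\ref{lemma_rank_correctness_infinite} genuinely prevents any vertex outside $\dom(r^*)$ from contributing a finite value to $r_\game(V)$, which is immediate since $\omega$ and $\omega+1$ are not in $\omega$ in the ordinal notation fixed in Section~\ref{sec_prelims}.
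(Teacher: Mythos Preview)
Your proposal is correct and follows exactly the approach the paper intends: the corollary is stated immediately after Lemma~\ref{lemma_rank_correctness} with the remark that it is obtained by ``combining Corollary~\ref{corollary_computationprops} and Lemma~\ref{lemma_rank_correctness}'', and you have simply made this combination explicit by showing $r_\game(V)\cap\omega=\im(r^*)$ via the two items of the lemma and then invoking Corollary~\ref{corollary_computationprops}.
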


%%%%%%%%%%%%%%%%%%%%%%%%%%%%%%%%%%%%%%%%
%%%%%%%%%%%%%%%%%%%%%%%%%%%%%%%%%%%%%%%%
\subsection{Characterizing Vertices of Resilience $\omega+1$}
\label{subsec_resilomegaplus1}
Our goal in this subsection is to determine the vertices of resilience~$\omega+1$, i.e., those from which Player~$0$ can win even under an infinite number of disturbances. Intuitively, in this setting, we give Player~$1$ control over the disturbance edges, as he cannot execute more than infinitely many disturbances during a play.
 In the following, we prove this intuition to be correct. To this end, we transform the arena of the game so that at a vertex of Player~$0$, first Player~$1$ gets to chose whether he wants to take one of the disturbance edges and, if not, gives control to Player~$0$, who is then able to use a standard edge. 

Given a game~$\game = (\arena, \wincond)$ with $\arena = (V, V_0, V_1, E, D)$, we define the \emph{rigged game}~${\game_{\rig}} = (\arena', \wincond')$ with $\arena' = (V', V_0', V_1', E', D')$ such that $V' = V_0' \cup V_1'$ with $V_0' = \set{\overline{v} \mid v\in V_0}$ and $V_1' = V$, and $D' = \emptyset$. The set~$E'$ of edges  is the union of the following sets:
	\begin{itemize}
	\item $D$: Player~$1$ uses a disturbance edge.
	\item $ \set{ (v ,\overline{v}) \mid v \in V_0} $: Player~$1$ does not use a disturbance edge and yields control to Player~$0$.
	\item $\set{ (\overline{v},v') \mid (v,v') \in E \text{ and } v\in V_0 }$: Player~$0$ has control and picks a standard edge.
	\item $ \set{ (v ,v') \mid (v,v') \in E \text{ and } v\in V_1 }  $: Player~$1$ takes a standard edge.
\end{itemize}
Further, $\wincond' = \set{\rho \in (V')^\omega \mid h(\rho) \in \wincond}$ where $h$ is the homomorphism induced by $h(v )= v$ and $h(\overline{v}) = \epsilon$ for every $v \in V$. 

\begin{figure*}
	\centering
	\scalebox{.93}{
       \begin{tikzpicture}
               \ParityVertexOne{6}{v_6}{1}{(-1,0)}
               \ParityVertexOne{5?}{v_4}{1}{(-2.5,1)}
               \ParityVertexZero{5!}{\overline{v_4}}{1}{(-4,1)}
               \ParityVertexOne{4?}{v_5}{0}{(-2.5,-1)}
               \ParityVertexZero{4!}{\overline{v_5}}{0}{(-3.5,0)}
               \ParityVertexOne{3}{v_3}{1}{(-4,-1)}
               \ParityVertexOne{2?}{v_2}{2}{(-5,0)}
               \ParityVertexZero{2!}{\overline{v_2}}{2}{(-6,1)}
               \ParityVertexOne{1}{v_1}{1}{(-7.5,0)}
               \ParityVertexOne{8?}{v_7}{0}{(1,1)}
               \ParityVertexZero{8!}{\overline{v_7}}{0}{(2.5,1)}
               \ParityVertexOne{9}{v_8}{1}{(4,1)}
               \ParityVertexOne{10?}{v_{9}}{0}{(1,-1)}
               \ParityVertexZero{10!}{\overline{v_{9}}}{0}{(2.5,-1)}
               \ParityVertexOne{12}{v_{10}}{0}{(4,-1)}
               
               \path
                       (6) edge[bend right] (5?) edge[bend left] (4?) edge[bend left] (8?) edge[bend right] (10?)
                       (5!) edge [bend right=10] (5?) edge[bend right] (2?)
                       (5?) edge [bend right=10] (5!)
                       (4!) edge [bend right=10] (4?)
                       (4?) edge (3) edge [bend right=10] (4!)
                       (3) edge [bend left] (2?)
                       (2?) edge [bend right=10] (2!) edge (1)
                       (2!) edge [bend right=10] (2?)
                       (1) edge [loop above] (1)
                       (8?) edge [bend right=25] (9) edge [bend right=10] (8!)
                       (8!) edge [bend right=10] (8?)
                       (10?) edge [bend left=25] (12) edge [bend right=10] (10!)
                       (10!) edge [bend right=10] (10?)
                       (9) edge [bend right=25] (8?)
                       (12) edge [bend left=25] (10?);
                       
			\begin{pgfonlayer}{background}
				\coordinate (northwest) at (1 |- 2!);
				\coordinate (southwest) at (1 |- 3);
				\coordinate (W1-southmid) at ($(6) ! .5 ! (10?)$);
				\coordinate (W1-southeast) at ($(12) ! .5 ! (9)$);
				\coordinate (W0-northwest) at (W1-southmid |- W1-southeast);	
				\coordinate (southeast) at (12);
				\coordinate (northeast) at (9);
				
				\draw[black!15,thick,rounded corners,fill=black!5]
					($(southwest) + (-.5cm,-.6cm)$) -|
					($(W1-southmid) - (.1cm,0)$) |-
					($(W1-southeast) + (.75cm,.1cm)$) --
					($(northeast) + (.75cm,.6cm)$) --
					($(northwest) + (-.5cm,.6cm)$) --
					cycle;
					
				\node[anchor=north west] at ($(northwest) + (-.5cm,.6cm)$) {$\winreg_1$};
				
				\draw[black!15,thick,rounded corners,fill=black!5]
					($(W1-southmid) + (.1cm,0)$) |-
					($(W1-southeast) + (.75cm,-.1cm)$) --
					($(southeast) + (.75cm,-.6cm)$) -|
					cycle;
					
				\node[anchor=north west] at ($(W0-northwest) + (.1cm,-.1cm)$) {$\winreg_0$};
			\end{pgfonlayer}
       \end{tikzpicture}
       }
	      	\caption{The rigged game obtained for the game of Figure~\ref{fig:example}.} \label{fig:rigged_game} 
\end{figure*}
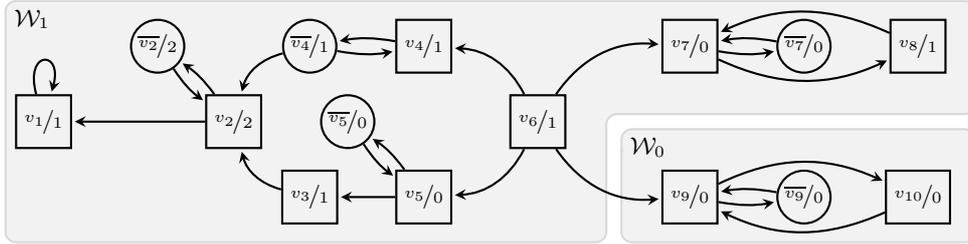

Figure~\ref{fig:rigged_game} illustrates the construction of a rigged game for the example game of Figure~\ref{fig:example} on Page~\pageref{fig:example} (note that the rigged game is also a parity game in this example). 
And indeed, the winning region of Player~$0$ corresponds to the vertices of resilience~$\omega + 1$ in the game of Figure~\ref{fig:example}.

The following lemma formalizes the observation that $\winreg_0({\game_{\rig}})$ characterizes the vertices of resilience~$\omega+1$ in $\game$. Note that we have no assumptions on $\game$ here. 

\begin{lemma}
\label{lemma_riggedcorrectness}
Let $v$ be a vertex of the  game~$\game$. Then, $v \in \winreg_0({\game_{\rig}})$ if and only if $r_\game(v) = \omega+1$.
\end{lemma}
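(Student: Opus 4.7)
The plan is to exploit the natural correspondence between plays in $\game$ and plays in $\game_{\rig}$ given by the homomorphism~$h$: a play $\rho' = u_0 u_1 u_2 \cdots$ in $\arena'$ is mapped to $h(\rho') \in V^\omega$, a play (without the disturbance bits) in $\arena$. Conversely, given a play in $\arena$, one can insert the $\overline{v}$-vertices wherever Player~1 in $\game_{\rig}$ chose to yield control. Under this correspondence, a Player-1-controlled move of taking a disturbance edge $(v,v') \in D$ in $\game_{\rig}$ corresponds exactly to a disturbance in $\game$, and the bit $b_j = 1$ witnesses exactly when Player~1 took such an edge rather than passing through $\overline{v_{j-1}}$. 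Since $h$-preimages of winning plays are winning by definition of $\wincond'$, and since plays in $\game_{\rig}$ have no disturbances, the upshot is that winning in $\game_{\rig}$ should coincide with being resilient against \emph{any} number of disturbances in $\game$, which is exactly $(\omega+1)$-resilience.

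For the direction ``$v \in \winreg_0(\game_{\rig}) \Rightarrow r_\game(v) = \omega+1$'', I take a winning strategy $\sigma'$ for Player~0 in $\game_{\rig}$ from $v$ and build an $(\omega+1)$-resilient strategy $\sigma$ for Player~0 in $\game$ from $v$ as follows. Given a play prefix $v_0 \cdots v_j$ in $\game$ with $v_j \in V_0$, I lift it to a prefix $\pi(v_0 \cdots v_j)$ in $\game_{\rig}$ by inserting, after each Player-0 vertex~$v_i$ ($i < j$) whose successor was taken via a standard edge, the extra vertex $\overline{v_i}$; after $v_j$ itself we append $\overline{v_j}$ (which is Player~0's turn in the rigged game). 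Then I set $\sigma(v_0 \cdots v_j) = \sigma'(\pi(v_0 \cdots v_j))$. For any play $\rho = (v_0,b_0)(v_1,b_1)\cdots$ consistent with $\sigma$, inserting the $\overline{v_i}$ vertices exactly where $b_{i+1} = 0$ and $v_i \in V_0$ produces a play in $\game_{\rig}$ consistent with $\sigma'$: Player-1 moves in $\game_{\rig}$ correspond to either standard edges from $V_1$, disturbance edges (when $b_{i+1} = 1$), or the choice of yielding to Player~0 via $(v_i,\overline{v_i})$ (when $b_{i+1} = 0$), and Player-0 moves from $\overline{v_i}$ exactly match $\sigma$'s choice. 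The resulting play in $\game_{\rig}$ is won by Player~0, so its $h$-image $v_0 v_1 v_2 \cdots$ lies in $\wincond$, regardless of how many disturbances occurred.

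For the converse direction, I take an $(\omega+1)$-resilient strategy $\sigma$ for Player~0 in $\game$ from $v$ and build a winning strategy $\sigma'$ for Player~0 in $\game_{\rig}$ from $v$. A play prefix in $\game_{\rig}$ ending at some $\overline{u} \in V_0'$ has the form $w\overline{u}$; its image $h(w\overline{u}) = h(w)u$ is a play prefix in $\game$ ending at $u \in V_0$. I define $\sigma'(w\overline{u}) = \sigma(h(w)u)$. Now any play $\rho'$ in $\game_{\rig}$ consistent with $\sigma'$ projects under $h$ to a play $\rho = h(\rho')$ in $\game$, and by equipping $\rho$ with disturbance bits recording which Player-1 moves were disturbance edges, $\rho$ is consistent with $\sigma$ (at Player-0 positions, the subsequent vertex is either picked by Player~0 following $\sigma'$, in which case it agrees with $\sigma$ by construction, or it is picked via a disturbance edge, which is allowed). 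Since $\sigma$ is $(\omega+1)$-resilient, $\rho$ satisfies $\wincond$ regardless of how many disturbances Player~1 introduced, hence $\rho' \in \wincond'$.

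The main obstacle is the bookkeeping in setting up the bijection between play prefixes of the two games and verifying that strategy consistency is preserved in both directions; once this is pinned down, both implications follow without further case analysis, and crucially no assumption about $\game$ (prefix-independence, determinacy) is needed because the argument is purely a translation between the two game models.
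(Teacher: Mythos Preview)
Your approach is essentially the paper's: both directions are proved by setting up translations between play prefixes of $\game$ and $\game_{\rig}$ and pulling strategies back and forth along them. There is one point in your ``$\Rightarrow$'' direction that needs more care, though. You define the lift~$\pi$ on a bare vertex sequence $v_0\cdots v_j$ by inserting $\overline{v_i}$ after each Player-$0$ vertex ``whose successor was taken via a standard edge''; but this information is not contained in the vertex sequence when $(v_i,v_{i+1})\in E\cap D$, so $\pi$ as stated is not a well-defined function of its declared input, and hence neither is $\sigma$. The paper fixes this by reconstructing the bits recursively from the strategy under construction: set $b_{j'}=1$ iff $v_{j'}\neq\sigma(v_0\cdots v_{j'-1})$, i.e., treat only \emph{consequential} disturbances as disturbances. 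With this adjustment your lift becomes well-defined, the play-level lift (which uses the actual bits) agrees with it on plays consistent with $\sigma$, and your consistency argument goes through verbatim. Apart from this bookkeeping detail (which you already flagged as the main obstacle) and a harmless slip in writing $h(w\overline{u})=h(w)u$ rather than $h(w\overline{u})=h(w)$, your proof matches the paper's.
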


\begin{proof}
The proof consists of constructing mappings between play prefixes and plays in both games, which are then used to transfer strategies between the games. This is conceptually straightforward, but technical due to the presence of the bits indicating whether a disturbance occurred or not. These have to be reconstructed to obtain proper mappings.

\textbf{\myquot{\boldmath$\Rightarrow$}:} Let Player~$0$ win ${\game_{\rig}}$ from $v $, say with winning strategy~$\sigma'$. We inductively translate play prefixes~$w$ in $\game$ into play prefixes~$t'(w)$ in ${\game_{\rig}}$ that satisfy the following invariant: $t'((v_0, b_0) \cdots (v_j, b_j))$ starts in $v_0 $ and ends in $v_j $.

For the induction start, we define $t'(v_0,b_0) = (v_0, 0)$; to define \[t'((v_0, b_0) \cdots (v_j, b_j)(v_{j+1}, b_{j+1})),\] we consider several cases:
\begin{itemize}
	\item If $b_{j+1} =1$, then $(v_j, v_{j+1}) \in D$, i.e., the play traverses the disturbance edge~$(v_j,v_{j+1})$. This move is mimicked by defining \[t'((v_0, b_0) \cdots (v_j, b_j)(v_{j+1}, b_{j+1})) = t'((v_0, b_0) \cdots (v_j, b_j)) \cdot (v_{j+1},0).\]
	
	\item If $b_{j+1} =0$, i.e., $(v_j, v_{j+1}) \in E$, and $v_{j} \in V_0$, then the play did not traverse a disturbance edge and instead allowed Player~$0$ to pick a standard edge~$(v_j,v_{j+1})$ to traverse. This move is mimicked by defining \[t'((v_0, b_0) \cdots (v_j, b_j)(v_{j+1}, b_{j+1})) = t'((v_0, b_0) \cdots (v_j, b_j)) \cdot (\overline{v_j},0) \cdot (v_{j+1},0).\]
	
	\item If $b_{j+1} =0$, i.e., $(v_j, v_{j+1}) \in E$, and $v_{j} \in V_1$, then the play traversed the standard edge~$(v_j,v_{j+1})$. This move is mimicked by defining \[t'((v_0, b_0) \cdots (v_j, b_j)(v_{j+1}, b_{j+1})) = t'((v_0, b_0) \cdots (v_j, b_j)) \cdot (v_{j+1},0).\]

\end{itemize}
Note that our invariant is satisfied in any case. Also, we lift $t'$ to infinite plays by taking limits as usual.

Let $d$ be the homomorphism induced by mapping $(v,b) \in V' \times\set{0,1}$ to $v \in V'$, i.e., $d$ removes the bits indicating the occurrence of disturbances.
Using the translation~$t'$, we define a strategy~$\sigma$ for Player~$0$ in $\game$ via 
\[\sigma(v_0 \cdots v_j) = \sigma'(d(t'((v_0, b_0) \cdots (v_j, b_j))) \cdot \overline{v_j} ),\]
 where $b_0 = 0$ and where for every $j'>0$, $b_{j'} =1$ if and only if $v_{j'} \neq \sigma(v_0 \cdots v_{j'-1})$, i.e., we reconstruct the consequential disturbances. A straightforward induction shows that for every play~$\rho = (v_0, b_0) (v_1, b_1) (v_2, b_2) \cdots$ in $\game$ that is consistent with $\sigma$, the play~$t'(\rho)$ is consistent with $\sigma'$. Hence, $t'(\rho) \in \wincond'$ for every $\rho$ starting in $v$. Furthermore, we have   $h(t'(\rho)) = v_0 v_1 v_2 \cdots \in \wincond$, as $t'(\rho) \in \wincond'$. Thus, $\rho = (v_0, b_0) (v_1, b_1) (v_2, b_2) \cdots$ is winning for Player~$0$.
As we have no restriction on the number of disturbances in $\rho$, $\sigma$ is $(\omega+1)$-resilient from $v$. Thus, $r_\game(v) = \omega+1$.

\textbf{\myquot{\boldmath$\Leftarrow$}:} Now, let $r_\game(v) = \omega+1$, i.e., Player~$0$ has an $(\omega+1)$-resilient strategy~$\sigma$ from $v$ in~$\game$. This time, we inductively define a translation~$t$ of play prefixes in ${\game_{\rig}}$ into play prefixes in $\game$. Here, it suffices to consider those prefixes that start and end in $V_1'$. For these, we satisfy the following invariant: if $w$ starts in $v_0 $ and ends in $v_j $, then $t(w) $ starts in $v_0$ and ends in $v_j$ as well. Note that $\game_\rig$ has no disturbance edges. Hence, the bits indicating whether such an edge has been traversed are always zero in plays of $\game_\rig$. Thus, we define $t(v_0,0 ) = (v_0,0)$ and consider several cases for the inductive step: 
\begin{itemize}
	\item First, assume we have a prefix of the form 
			$ (v_0,0)  \cdots (v_j, 0)  (v_{j+1},0) $
			for some $v_j \in V_0$, i.e., Player~$1$'s move simulates the disturbance edge $(v_j, v_{j+1}) \in D$. Then, we define 
\[		\quad\quad t((v_0,0)  \cdots (v_j, 0)  (v_{j+1},0)) =  t((v_0,0)  \cdots (v_j,0)  )\cdot (v_{j+1},1) \enspace .
\]	
	\item Next, assume we have a prefix of the form 
		$ (v_0,0)  \cdots (v_j,0)  (v_{j+1},0) $
		for some $v_j \in V_1$, i.e., Player~$1$'s move simulates the standard edge $(v_j, v_{j+1}) \in E$. Then, we define 
\[		\quad\quad t((v_0,0)  \cdots (v_j,0)  (v_{j+1},0)) =  t((v_0,0)  \cdots (v_j,0)  )\cdot (v_{j+1},0) \enspace .
\]
	\item Finally, the last case is a prefix of the form 
	$ (v_0,0)  \cdots (v_j,0)  (\overline{v_j}, 0) (v_{j+1},0) $ for some $v_j \in V_0$, i.e., Player~$0$'s move simulates the standard edge $(v_j, v_{j+1}) \in E$. Then, we define 
\[		\quad\quad t((v_0,0)  \cdots (v_j,0)  (\overline{v_j},0) (v_{j+1},0)) = t((v_0,0)  \cdots (v_j,0) )\cdot (v_{j+1},0) \enspace .
\]	
\end{itemize}
The invariant is satisfied in any case. Also, we can again lift $t$ to infinite plays via limits.

Now, let $d$ be the homomorphism induced by mapping $(v,b) \in V \times\set{0,1}$ to $v \in V$, i.e., $d$ again deletes the bits indicating the occurrence of disturbances.
Then, we define a strategy~$\sigma'$ for Player~$0$ in ${\game_{\rig}}$ via
\[\sigma'(v_0  \cdots v_j \overline{v_j}) = \sigma(d(t((v_0,0)  \cdots (v_j,0) ))).\]
A straightforward induction shows that for every play~$\rho$ that is consistent with $\sigma'$, the play $t(\rho)$ is consistent with $\sigma$. Hence, if $\rho$ starts in $v $, then $t(\rho)$ satisfies the winning condition, as $\sigma$ is $(\omega+1)$-resilient from $v$. Let $t(\rho) = (v_0, b_0)(v_1, b_1)(v_2, b_2) \cdots $. Then, $v_0 v_1 v_2 \cdots \in \wincond$. Now, $h(\rho) = v_0 v_1 v_2 \cdots$ implies $\rho \in \wincond'$. Thus, $\sigma'$ is a winning strategy for Player~$0$ from $v $.\qed 
\end{proof}

With an adaption of the rigged game, one can also directly characterize the vertices with resilience~$\omega$. However, since our algorithm and the rigged game already provide   an indirect characterization, we do not present this construction here.

%Note that a slight extension of the rigged game also allows to characterize the vertices of infinite resilience, i.e., with resilience $\omega$ or $\omega+1$. To this end, one uses the same arena as for the rigged game, but adds to the winning condition of the rigged game all those plays during which Player~$1$ takes infinitely many disturbance edges. Then, Player~$0$ has to satisfy the original winning condition if only finitely many disturbance edges are taken by Player~$1$, but wins vacuously if Player~$1$ takes infinitely many disturbance edges. This is possible from exactly those vertices that have resilience $\omega$ or $\omega+1$.
%However, for our purposes, we do not need to investigate this modified rigged game. We have shown how to determine the vertices of finite resilience and those of resilience~$\omega+1$.   Thus, all other vertices have resilience~$\omega$. \todo{remark}

Furthermore, the proof of Lemma~\ref{lemma_riggedcorrectness}
 also yields the preservation of positional and finite-state strategies.
To this end, consider the first implication proved above. If $\sigma$ is positional (finite-state), then $\sigma'$ is positional (finite-state) as well. Thus, applying both implications yields the following corollary. 

\begin{corollary}
\label{corollary_rigged} Let $\game$ and $\game_{\rig}$ be defined as above and $v$ a vertex of $\game$.
\begin{enumerate}
	\item\label{corollary_rigged_positional}
 Assume Player~$0$ has a positional winning strategy for ${\game_{\rig}}$ from $v $. Then, Player~$0$ has an $(\omega+1)$-resilient  positional strategy for $\game$ from $v$. 
\item \label{corollary_rigged_finitestate}
Assume Player~$0$ has a finite-state winning strategy for ${\game_{\rig}}$ from $v $. Then, Player~$0$ has an $(\omega+1)$-resilient  finite-state strategy (of the same size) for $\game$ from $v$. 
\end{enumerate}
\end{corollary}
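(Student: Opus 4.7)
The plan is to extract both items from a closer look at the ``$\Rightarrow$''-direction of the proof of Lemma~\ref{lemma_riggedcorrectness}. There, starting from a winning strategy $\sigma'$ for Player~$0$ in $\game_\rig$ from $v$, an $(\omega+1)$-resilient strategy $\sigma$ in $\game$ from $v$ was defined by
\[
\sigma(v_0 \cdots v_j) \;=\; \sigma'\bigl(d(t'((v_0, b_0) \cdots (v_j, b_j))) \cdot \overline{v_j}\bigr),
\]
where the bits $b_{j'}$ are the consequential disturbances, reconstructed from $\sigma$ itself as discussed in the remark of Subsection~\ref{subsec_games}. Both items are then verified by inspecting how this translation interacts with the relevant structural restriction on $\sigma'$.

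For Item~\ref{corollary_rigged_positional}, a positional $\sigma'$ depends only on the last vertex of its input. In the formula above, that last vertex is $\overline{v_j}$, which is itself a function of $v_j$ alone. Hence $\sigma(v_0 \cdots v_j)$ depends only on $v_j$, so $\sigma$ is positional.

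For Item~\ref{corollary_rigged_finitestate}, let $\sigma'$ be implemented by a memory structure $\mem' = (M', \init', \update')$ with next-move function $\nxt'$. I would build a memory structure $\mem = (M', \init, \update)$ with next-move function $\nxt$ for $\sigma$ whose state, after any prefix in $\game$, coincides with the state $\mem'$ would have after reading the translated prefix in $\game_\rig$. The next-move function is $\nxt(v, m) = \nxt'(\overline{v}, \update'(m, \overline{v}))$, and $\update$ performs a case split reflecting the three cases of the translation: a standard move by Player~$0$ from a $V_0$-vertex triggers two successive applications of $\update'$ (through $\overline{v_j}$ and then the new vertex), while a consequential disturbance from a $V_0$-vertex or a move by Player~$1$ triggers a single application of $\update'$. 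The induced strategy coincides with $\sigma$, hence is $(\omega+1)$-resilient from $v$, and its size is $|M'|$, as required.

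The main technical hurdle will be distinguishing the three cases in the definition of $\update$, since $\update$ receives only the current memory state and the new vertex as inputs, whereas the split ostensibly also refers to the previous vertex (its owner) and to $\sigma$'s previous recommendation (to detect a consequential disturbance). I expect to resolve this by exploiting that $\update$ is invoked immediately after $\nxt$, so that $\nxt$'s previous output (and the owner information it already consulted) can be threaded through the memory state without enlarging its range; the correctness of the construction then follows by a straightforward induction on the length of the play prefix, mirroring the induction already used in the ``$\Rightarrow$''-direction of Lemma~\ref{lemma_riggedcorrectness}.
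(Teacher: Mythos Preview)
Your approach mirrors the paper's: both items are extracted from the $\Rightarrow$-direction of Lemma~\ref{lemma_riggedcorrectness}, and your argument for Item~\ref{corollary_rigged_positional} is exactly right and matches the paper's (one-line) justification.

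For Item~\ref{corollary_rigged_finitestate} you go further than the paper, which gives no details, and you correctly locate the difficulty: to simulate $\mem'$ along the translated prefix, the update function must know (i) the previous vertex $v_j$ (to feed $\overline{v_j}$ into $\update'$ in the non-disturbance case) and (ii) whether a consequential disturbance just occurred (to choose between the one-step and the two-step update). Your proposed resolution, however, does not go through as stated. In the memory-structure model of Subsection~\ref{subsec_finitestatestrategies}, $\update\colon M\times V\to M$ receives only the current memory state and the \emph{new} vertex; $\nxt$'s previous output is not an argument of $\update$, and there is no mechanism by which ``invoking $\update$ immediately after $\nxt$'' makes that output available. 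Nor can the information be packed into $M'$ without enlarging it: the $\mem'$-state after the translated prefix in $\game_\rig$ need not determine $v_j$ (or even whether $v_j\in V_0$). The straightforward repair is to take $M = M'\times V$, storing the last vertex alongside the $\mem'$-state; then $\update((m,v_j),v_{j+1})$ can perform your case split and recompute the previous recommendation as $\nxt'(\overline{v_j},\update'(m,\overline{v_j}))$. This yields a finite-state $(\omega+1)$-resilient strategy, but of size $|M'|\cdot|V|$ rather than $|M'|$. If you want to match the corollary's ``same size'' clause verbatim, that requires a further argument which neither your proposal nor the paper's one-line remark supplies.
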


%%%%%%%%%%%%%%%%%%%%%%%%%%%%%%%%%%%%%%%%
%%%%%%%%%%%%%%%%%%%%%%%%%%%%%%%%%%%%%%%%
\subsection{Computing Optimally Resilient Strategies}
\label{subsec_strategies}
This subsection is concerned with computing the resilience and optimally resilient strategies. Here, we focus on  positional and finite-state strategies, which are sufficient for the majority of winning conditions in the literature. Nevertheless, it is easy to see that our framework is also applicable to infinite-state strategies. 
 
In the proof of Lemma~\ref{lemma_rank_correctness}, we construct strategies~$\sigmaf$ and $\sigmaomega$ such that $\sigmaf$ is $r_\game(v)$-resilient from every $v$ with $r_\game(v) \in \omega$ and such that $\sigmaomega$ is~$\omega$-resilient from every~$v$ with $r_\game(v) \geq \omega$. 
Both strategies are obtained by combining winning strategies for some game~$(\arena, \wincond \cap \safety(U))$. However, even if these winning strategies are positional, the strategies~$\sigmaf$ and $\sigmaomega$ are in general not positional. Nonetheless, we show in the proof of Theorem~\ref{theorem_main} that such positional winning strategies and a positional one for $\game_\rig$ can be combined into a single positional optimally resilient strategy. 

Recall the requirements from Subsection~\ref{subsec_finiteresil} for a game $(\arena, \wincond)$: $\wincond$ is prefix-independent and the game~$\game_U$ is determined for every $U \subseteq V$, where we write $\game_U$ for the game~$(\arena, \wincond \cap \safety(U))$ for some $U \subseteq V$. 
 To prove the results of this subsection, we need to impose some additional effectiveness requirements: we require that each game~$\game_U$ and the rigged game~${\game_{\rig}}$ can be effectively solved. Also, we first assume that Player~$0$ has positional winning strategies for each of these games, which have to be effectively computable as well. We discuss the severity of these requirements in Section~\ref{sec_discussion}.

\begin{theorem}
\label{theorem_main}
Let $\game$ satisfy all the above requirements. Then, the resilience of $\game$'s vertices and a positional optimally resilient strategy can be effectively computed.
\end{theorem}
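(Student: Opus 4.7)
The plan is to combine the characterizations developed in Subsections~\ref{subsec_finiteresil} and~\ref{subsec_resilomegaplus1} and then to stitch together positional winning strategies of the auxiliary subgames into one global positional strategy. In detail, I would first iterate the disturbance and risk updates of Subsection~\ref{subsec_finiteresil} until stabilization, which by Corollary~\ref{corollary_computationprops} happens after at most $2\size{V}$ updates, each one invoking the solver for a single game~$\game_U$. In parallel I would solve the rigged game~${\game_{\rig}}$ to obtain $\winreg_0({\game_{\rig}})$. Lemma~\ref{lemma_rank_correctness} together with Lemma~\ref{lemma_riggedcorrectness} then determines $r_\game$ entirely: every $v \in \dom(r^*)$ has resilience $r^*(v)$, every $v \in \winreg_0({\game_{\rig}})$ has resilience $\omega+1$, and every remaining vertex lies in $X \setminus \winreg_0({\game_{\rig}})$ (where $X = V \setminus \dom(r^*)$) and has resilience $\omega$.

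For the strategy I would fix, for every $k$ in the image of $r^*$ with $k > 0$, a positional winning strategy $\sigma_k$ for Player~$0$ in $\game_{U_k}$ with $U_k = \set{v' \in \dom(r^*) \mid r^*(v') < k}$; a positional winning strategy $\sigma_X$ for $\game_X = (\arena, \wincond \cap \safety(\dom(r^*)))$; and a positional winning strategy for ${\game_{\rig}}$, which Corollary~\ref{corollary_rigged} converts into a positional $(\omega+1)$-resilient strategy $\sigma_\rig$ for $\game$. The combined positional strategy $\sigma^*$ then plays $\sigma_{r^*(v)}(v)$ on $v \in V_0 \cap \dom(r^*)$ with $r^*(v) > 0$, plays $\sigma_\rig(v)$ on $v \in V_0 \cap \winreg_0({\game_{\rig}})$, plays $\sigma_X(v)$ on $v \in V_0 \cap (X \setminus \winreg_0({\game_{\rig}}))$, and is arbitrary on $V_0 \cap \winreg_1(\game)$.

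The main obstacle is proving that $\sigma^*$ is indeed $r_\game(v)$-resilient from every $v$. The two infinite cases are comparatively direct: starting in $\winreg_0({\game_{\rig}})$ the play stays in this region even under arbitrary disturbances, since it is closed under all edges usable in plays consistent with $\sigma_\rig$, while starting in $X \setminus \winreg_0({\game_{\rig}})$ the play is kept inside $X$ by $\sigma_X$ unless a disturbance pushes it into $\winreg_0({\game_{\rig}})$, whereupon $\sigma_\rig$ takes over; in either scenario the disturbance-free suffix satisfies $\wincond$, and prefix-independence extends this to the whole play. The delicate case is $v \in \dom(r^*)$ with $k_0 = r^*(v) > 0$ under at most $k_0 - 1$ disturbances. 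Here I would use stability of $r^*$ under both updates to argue that along the play the rank drops by at most one per disturbance, while along ordinary moves it does not drop at all, since $\sigma_{r^*(v_j)}$ leaves $W_{r^*(v_j)} = \winreg_0(\game_{U_{r^*(v_j)}})$ invariant and Player~$1$'s ordinary moves from $W_{r^*(v_j)}$ do likewise (by determinacy of $\game_{U_{r^*(v_j)}}$). Hence the rank stays strictly positive throughout and is non-decreasing along the disturbance-free suffix; being bounded it eventually stabilizes at some $k^*$, from where $\sigma^*$ coincides with $\sigma_{k^*}$, so the remaining suffix satisfies $\wincond$. Prefix-independence transfers this to the whole play, and if the play ever leaves $\dom(r^*)$ into $X$ the argument reduces to one of the two infinite-resilience cases above.
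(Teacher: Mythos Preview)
Your proposal is correct and yields the same result as the paper, but the construction of the positional optimally resilient strategy differs genuinely. The paper fixes a positional strategy~$\sigma_v$ \emph{for every vertex}~$v$, equips~$V$ with a strict linear order~$\prec$ refining resilience, introduces reachability sets~$R_v$ and the selector~$m(v) = \min_\prec\{\,v' \mid v \in R_{v'}\,\}$, and plays~$\sigma_{m(v)}(v)$; optimality is argued by showing that~$m$ is nonincreasing along a play and hence eventually stabilizes. Your construction is more direct: you take one uniform positional winning strategy~$\sigma_k$ per rank (plus~$\sigma_X$ and~$\sigma_\rig$) and dispatch on the current vertex's resilience class. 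The crucial observation enabling this---that along ordinary edges the rank cannot drop, because~$\sigma_k$ (and Player~$1$'s moves from~$\winreg_0(\game_{U_k})$) keep the play outside~$U_k$---is correct and replaces the paper's~$m$-monotonicity argument by rank-monotonicity. What the paper's route buys is a cleaner treatment of the three resilience regimes via a single stabilization argument and no implicit reliance on \emph{uniform} positional winning strategies per subgame (it starts from per-vertex ones and the~$\prec$/$m$ machinery manufactures uniformity on the fly); what your route buys is simplicity and a more transparent invariant. One small imprecision: in the resilience-$\omega$ case you write that only a disturbance can push the play into~$\winreg_0(\game_\rig)$, but in fact an ordinary move consistent with~$\sigma_X$ (or a Player~$1$ move) can do so as well, since~$\winreg_0(\game_\rig) \subseteq X$. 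This is harmless---once in~$\winreg_0(\game_\rig)$ the play stays there under~$\sigma_\rig$---but the case split should read ``unless the play enters~$\winreg_0(\game_\rig)$'' rather than singling out disturbances.
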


To prove this result, we refine the following standard technique that combines positional winning strategies for  games with prefix-independent winning conditions.

Assume we have a positional strategy $\sigma_v$ for every vertex~$v$ in some set~$W \subseteq V$ such that $\sigma_v$ is winning from~$v$. Furthermore, let $R_v$ be the set of vertices visited by plays that start in $v$ and are consistent with $\sigma_v$. Also, let $m(v) = \min_\prec\set{v' \in V \mid v \in R_{v'}}$ for some strict total ordering~$\prec$ of $W$.
  Then, the positional strategy~$\sigma$ defined by $\sigma(v) = \sigma_{m(v)}(v)$ is winning from each $v \in W$, as along every play that starts in some $v \in W$ and is consistent with $\sigma$, the value of the function~$m$ cannot increase. Thus, after it has stabilized, the remaining suffix is consistent with some strategy~$\sigma_{v'}$. Hence, the suffix is winning for Player~$0$ and prefix-independence implies that the whole play is winning for her as well.
  
  Here, we have to adapt this reasoning to respect the resilience of the vertices and to handle disturbance edges. Also, we have to pay attention to vertices of resilience~$\omega+1$, as plays starting in such vertices have to be winning under infinitely many disturbances. 

\begin{proof}[of Theorem~\ref{theorem_main}]
The effective computability of the resilience follows from the effectiveness requirements on $\game$: to compute the ranking~$r^*$, it suffices to compute the disturbance and risk updates. The former are trivially effective while the effectiveness of the latter ones follows from our assumption. Lemma~\ref{lemma_rank_correctness} shows that $r^*$ correctly determines the resilience of all vertices with finite resilience. Finally, by solving the rigged game, we also determine the resilience of the remaining vertices (Lemma~\ref{lemma_riggedcorrectness}). Again, this game can be solved due to our assumption. 

Thus, it remains to show how to compute a positional optimally resilient strategy. To this end, we compute a positional strategy~$\sigma_v$ for every $v$ satisfying the following:
\begin{itemize}
	\item For every $v \in V$ with $r_\game(v) \in \omega \setminus \set{0}$, the strategy~$\sigma_v$ is winning for Player~$0$ from $v$ for the game~$(\arena, \wincond \cap \safety(\set{v' \in V \mid r_\game(v') < r_\game(v)}))$. We have shown the existence of such a strategy in the proof of Item~\ref{lemma_rank_correctness_finite} of Lemma~\ref{lemma_rank_correctness}.
	
	\item For every $v \in V$ with $r_\game(v)=\omega$, the strategy~$\sigma_v$ is winning for Player~$0$ from $v$ for the game~$(\arena, \wincond \cap \safety(\set{v' \in V \mid r_\game(v') \in \omega}))$.
	 We have shown the existence of such a strategy in the proof of Item~\ref{lemma_rank_correctness_infinite} of Lemma~\ref{lemma_rank_correctness}.

	\item For every $v \in V$  with $r_\game(v) = \omega+1$, the strategy~$\sigma_v$ is~$(\omega+1)$-resilient from $v$. 
	The existence of such a strategy follows from Item~\ref{corollary_rigged_positional} of Corollary~\ref{corollary_rigged}, as we assume Player~$0$ to win $\game_\rig$ with positional strategies.
	
	\item For every $v \in V$ with $r_\game(v) = 0$, we fix an arbitrary positional strategy~$\sigma_v$ for Player~$0$. 

\end{itemize}

Furthermore, we fix a strict linear order~$\prec$ on $V$ such that $v \prec v'$ implies $r_\game(v) \le r_\game(v')$, i.e., we order the vertices by ascending resilience. For $v\in V$ with $r_\game(v) \neq \omega+1$, let $R_v$ be the set of vertices reachable via disturbance-free plays that start in $v$ and are consistent with~$\sigma_v$. On the other hand, for $v \in V$ with $r_\game(v) = \omega+1$, let $R_v$ be the set of vertices reachable via plays with arbitrarily many disturbances that start in $v$ and are consistent with~$\sigma_v$.

We claim $R_v \subseteq \set{v' \in V \mid r_\game(v') \ge r_\game(v)}$ for every $v \in V$ ($\ast$). For $v$ with $r_\game(v) \neq \omega +1$ this follows immediately from the choice of $\sigma_v$. Thus, let $v$ with $r_\game(v) = \omega+1$. Assume $\sigma_v$ reaches a vertex~$v'$ of resilience~$r_\game(v') \neq \omega+1$. Then, there exists a play~$\rho'$ starting in $v'$ that is consistent with $\sigma_v$, has less than $\omega+1$ many disturbances and is losing for Player~$0$. Thus, the play obtained by first taking the play prefix to $v'$ and then appending $\rho'$ without its first vertex yields a play starting in $v$, consistent with $\sigma_v$, but losing for Player~$0$. This play witnesses that $\sigma_v$ is not $(\omega+1)$-resilient from $v$, which contradicts our assumption and thus concludes the proof of the claim for the case~$r_\game(v) = \omega+ 1$.

Let $m \colon V \rightarrow V$ be given as $m(v) = \min_\prec\set{v' \in V \mid v \in R_{v'}}$ and define the positional strategy~$\sigma$ as $\sigma(v) = \sigma_{m(v)}(v)$. By our assumptions, $\sigma$ can be effectively computed. It remains to show that it is optimally resilient.

To this end, we apply the following two properties of edges~$(v,v')$ that may appear during a play that is consistent with $\sigma$, i.e., we either have $v \in V_0 $ and $\sigma(v) = v'$ (which implies $(v,v') \in E$), or $v \in V_1$ and $(v,v') \in E$, or $v \in V_0$ and $(v,v') \in D$: 
\begin{enumerate}
	\item If $(v,v') \in E$, then we have $r_\game(v) \le r_\game(v')$ and $m(v) \ge m(v')$. The first property follows from minimality of $m(v)$ and ($\ast$) while the second follows from the definition of $R_v$. 

	\item If $(v,v') \in D$, then we distinguish several subcases, which all follow immediately from the definition of resilience:
	\begin{itemize}
		\item If $r_\game(v) \in \omega$, then $r_\game(v') \ge r_\game(v) -1$. 
		\item If $r_\game(v) = \omega$, then $r_\game(v') = \omega$, and
		\item If $r_\game(v) = \omega + 1$, then $r_\game(v') = \omega + 1$ and $m(v) \ge m(v')$ (here, the second property follows from the definition of $R_v$ for~$v$ with~$r_\game(v) = \omega + 1$, which takes disturbance edges into account).
	\end{itemize}
\end{enumerate}

Now, consider a play~$\rho = (v_0, b_0) (v_1, b_1) (v_2, b_2) \cdots$ that is consistent with $\sigma$. If $r_\game(v_0) = 0$ then we have nothing to show, as every strategy is $0$-resilient from $v$.

Now, assume $r_\game(v_0) \in \omega \setminus\set{0}$. We have to show that if $\rho$ has less than $r_\game(v_0)$ disturbances, then it is winning for Player~$0$. An inductive application of the above properties shows that in that case the last disturbance edge leads to a vertex of non-zero resilience. Furthermore, as the values~$m(v_j)$ are only decreasing afterwards, they have to stabilize at some later point.
Hence, there is some suffix of~$\rho$ that starts in some~$v'$ with non-zero resilience and that is consistent with the strategy~$\sigma_{v'}$.
Thus, the suffix is winning for Player~$0$ by the choice of $\sigma_{v'}$ and prefix-independence implies that $\rho$ is winning for her as well.

Next, assume $r_\game(v_0) = \omega$. We have to show that if $\rho$ has a finite number of disturbances, then it is winning for Player~$0$. Again, an inductive application of the above properties shows that in that case the last disturbance edge leads to a vertex of resilience~$\omega$ or~$\omega+1$. Afterwards, the values~$m(v_j)$ stabilize again.
Hence, there is some suffix of~$\rho$ that starts in some~$v'$ with non-zero resilience and that is consistent with the strategy~$\sigma_{v'}$.
Thus, the suffix is winning for Player~$0$ by the choice of $\sigma_{v'}$ and prefix-independence implies that $\rho$ is winning for her as well.

Finally, assume $r_\game(v_0) = \omega+1$. Then, the above properties imply that $\rho$ only visits vertices with resilience~$\omega+1$ and that the values~$m(v_j)$ eventually stabilize. Hence, there is a suffix of $\rho$ that is consistent with some $(\omega+1)$-resilient strategy~$\sigma_{v'}$, where $v'$ is the first vertex of the suffix. Hence, the suffix is winning for Player~$0$, no matter how many disturbances occur. This again implies that $\rho$ is winning for her as well. \qed
\end{proof}

The algorithm determining the vertices' resilience and a positional optimally resilient strategy first computes~$r^*$ and the winner of the rigged game. This yields the resilience of $\game$'s vertices. Furthermore, the strategy is obtained by combining winning strategies for the games~$\game_U$ and for the rigged game as explained above.

Next, we analyze the complexity of the algorithm sketched above in some more detail.
The inductive definition of the $r_j$ can be turned into an algorithm computing~$r^*$ (using the results of Lemma~\ref{lemma_ranktermination} to optimize the naive implementation), which has to solve $\bigo(\size{V})$ many games (and compute winning strategies for some of them) with winning condition~$\wincond \cap \safety(U)$.
Furthermore, the rigged game, which is of size~$\bigo(\size{V})$, has to be solved and winning strategies have to be determined.
Thus, the overall complexity is in general dominated by the complexity of solving these tasks.

We explicitly state one complexity result for the important case of parity games, using the fact that each of these games is then a parity game as well. Also, we use a quasipolynomial time algorithm for solving parity games~\cite{CJKLS16,FearnleyJS0W17,JurdzinskiL17,Lehtinen18} to solve the games $\game_U$ and ${\game_{\rig}}$.

\begin{theorem}
\label{theorem_main_parity}
Optimally resilient strategies in parity games are positional and can be computed in quasipolynomial time.
\end{theorem}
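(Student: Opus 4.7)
The plan is to show that parity games meet all three prerequisites of Theorem~\ref{theorem_main}---prefix-independence, determinacy of all safety-restricted subgames, and effective computability of positional winning strategies in those subgames and in the rigged game---and then invoke that theorem together with a quasipolynomial parity solver. Prefix-independence of $\parity(\col)$ is immediate from the $\limsup$ definition, so the first hypothesis is free.

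For the second hypothesis, I would argue that each auxiliary game $\game_U = (\arena, \parity(\col)\cap\safety(U))$ is itself equivalent to a parity game on an arena of size $\bigo(\size{V})$. The standard trick is to redirect every edge entering an unsafe vertex $v \in U$ into a fresh absorbing sink whose unique self-loop carries an odd color strictly greater than all colors appearing in $\col$; alternatively, one may simply recolor $U$-vertices with such a fresh maximal odd color and make them absorbing. Either transformation preserves the winner from every vertex outside $U$ (reaching $U$ now forces the maximal odd color to dominate $\limsup$) and only increases the vertex set by a constant. Positional determinacy of parity games then yields determinacy of $\game_U$ and the existence of positional winning strategies for both players, which is exactly what the construction of $\sigmaf$ and $\sigmaomega$ in the proof of Lemma~\ref{lemma_rank_correctness} needs.

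Next, I would verify that the rigged game $\game_\rig$ is again a parity game of size $\bigo(\size{V})$. The construction in Subsection~\ref{subsec_resilomegaplus1} only duplicates each Player~$0$ vertex by adding a copy $\overline{v}$; assigning both $v$ and $\overline{v}$ the original color $\col(v)$ makes the homomorphism~$h$ color-preserving on non-$\epsilon$ outputs, so $\wincond' = \set{\rho \mid h(\rho) \in \parity(\col)}$ coincides with the parity condition induced by this coloring on $V'$. Hence $\game_\rig$ is a parity game with $\bigo(\size{V})$ vertices and the same color range as $\game$, and Player~$0$ has a positional winning strategy on her winning region by positional determinacy.

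With these observations the two effectiveness requirements of Theorem~\ref{theorem_main} reduce to solving $\bigo(\size{V})$ parity games of size $\bigo(\size{V})$, each over the original color set. Using any of the quasipolynomial parity-game algorithms~\cite{CJKLS16,FearnleyJS0W17,JurdzinskiL17,Lehtinen18} as the subroutine, the total running time remains quasipolynomial. Theorem~\ref{theorem_main} then produces a single positional optimally resilient strategy, proving both the positionality claim and the complexity bound. The only mildly delicate step is the encoding of $\parity(\col)\cap\safety(U)$ as a parity condition without blowing up the color range, but since an added uniform odd color suffices, this is routine rather than a real obstacle.
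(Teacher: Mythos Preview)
Your proposal is correct and follows essentially the same route as the paper: verify that each auxiliary game $\game_U$ and the rigged game $\game_\rig$ are again parity games of size $\bigo(\size{V})$ with the same (or only constant-extended) color range, then invoke Theorem~\ref{theorem_main} together with a quasipolynomial parity solver on the resulting $\bigo(\size{V})$ instances. You are simply more explicit than the paper about the encodings (the fresh maximal odd color for $\safety(U)$ and the coloring $\col(\overline{v})=\col(v)$ for $\game_\rig$, which matches how the paper draws Figure~\ref{fig:rigged_game}), but the underlying argument is identical.
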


Using similar arguments, one can also analyze games where positional strategies do not suffice.
As above, assume $\game$ satisfies the same assumptions on determinacy and effectiveness, but only require that Player~$0$ has finite-state winning strategies for each game with winning condition $(\arena, \wincond \cap \safety(U))$ and for the rigged game~${\game_{\rig}}$. Then, one can show that she has a finite-state optimally resilient strategy.
In fact, by reusing memory states, one can construct an optimally resilient strategy that it is not larger than any constituent strategy.

%%%%%%%%%%%%%%%%%%%%%%%%%%%%%%%%%%%%%%%%%%%%%%%%%%%%%%%%%%%%%%%%%%
%%%%%%%%%%%%%%%%%%%%%%%%%%%%%%%%%%%%%%%%%%%%%%%%%%%%%%%%%%%%%%%%%%
\section{Discussion}
\label{sec_discussion}
In this section, we discuss the assumptions required to be able to compute positional (finite-state) optimally resilient strategies with the algorithm presented in Section~\ref{sec_results}. Here, we only consider the case of positional strategies. The case of finite-state strategies is analogous. 

To this end, fix a game~$\game = (\arena, \wincond)$ with vertex set~$V$ and recall that ${\game_{\rig}}$ is the corresponding rigged game and that we defined $\game_U = (\arena, \wincond \cap \safety(U))$ for $U \subseteq V$.  Now, the assumptions on~$\game$ that need to be satisfied for Theorem~\ref{theorem_main} to hold are as follows:
\begin{enumerate}
	\item The game~$\game_U$ is determined for every $U \subseteq V$.
	\item Player~$0$ has a positional winning strategy from every vertex in her winning regions in the~$\game_U$ and in the game~${\game_{\rig}}$.
	\item Each~$\game_U$ and the game~${\game_{\rig}}$ can be effectively solved and positional winning strategies can be effectively computed for each such game.  
	\item $\wincond$ is prefix-independent.
\end{enumerate}

First, consider the determinacy assumption. For $W \subseteq V$ let $\arena \setminus W$ denote the arena obtained from $\arena$ by removing all vertices from $W$, as well as all edges from or to vertices in $W$. It is easy to show that $\arena \setminus W$ has no terminal vertices, if $W$ is the winning region of Player~$1$ in a safety game played in $\arena$. 
Now, it is straightforward to show
\[
\winreg_0(\game_U) = \winreg_0(\arena \setminus W, \wincond \cap (V \setminus W)^\omega)
\]
and
\[
\winreg_1(\game_U) = W \cup \winreg_1(\arena \setminus W, \wincond \cap (V \setminus W)^\omega)
\]
where $W = \winreg_1(\arena,\safety(U))$. 
Thus, one can remove the winning region of Player~$1$ in the safety game and then consider the subgame of $\game$ played in Player~$0$'s winning region of the safety game. Thus, all subgames of $\game$ being determined suffices for the determinacy requirement being satisfied. The winning conditions one typically studies, e.g., parity and in fact all Borel ones~\cite{Martin75}, satisfy this property.

The next requirement concerns the existence of positional winning strategies for the games~$\game_U$ and ${\game_{\rig}}$. For the~$\game_U$, this requirement is satisfied if Player~$0$ has positional winning strategies for all subgames of $\game$, as argued above. As every positional  optimally resilient strategy is also a winning strategy in a certain subgame, this condition is necessary. Now, consider~$\game_\rig$, whose winning condition can be written as $h^{-1}(\wincond)$ for the homomorphism~$h$ from Subsection~\ref{subsec_resilomegaplus1}.
 The winning conditions one typically studies, e.g., the Borel ones, are closed w.r.t. such supersequences. If $\game$ is from a class of winning conditions that allows for positional  winning strategies for Player~$0$, then this class typically also contains ${\game_{\rig}}$.
Also, the assumption on the effective solvability and computability of positional strategies is obviously necessary, as we solve a more general problem when determining optimally resilient strategies.

Finally, let us consider prefix-independence.
If the winning condition is not prefix-independent, then the algorithm presented in Section~\ref{sec_results} does not compute the resilience of vertices correctly.
In fact, recall that a winning condition~$\wincond$ is prefix-independent if, for all plays~$\rho$ and all play prefixes~$w$, we have~$\rho \in \wincond$ if and only if~$w\rho \in \wincond$.
We show that neither implication of this equivalence suffices on its own for the algorithm from Section~\ref{sec_results} to compute the correct resilience of vertices.

\begin{figure}[h]
	\centering
	\begin{tikzpicture}
		\begin{scope}
			\node[p0] (0) at (0,0) {$v$};
			\node[p0] (1) at (2,0) {$v'$};
			
			\node[anchor=west] at (3.75,0) {$\wincond_k = \{v_0v_1v_2\cdots \in V^\omega \mid\card{\set{j \mid v_j = v}} \leq k\}$};

			\path
				(0) edge[bend right=10] (1)
				(1) edge[loop right] (1) edge[fault,bend right=10] (0);
				
			\begin{pgfonlayer}{background}
			\draw[black!15,thick,rounded corners,fill=black!5]
				($(0) - (1.5cm,.45cm)$) |-
				($(1) + (1cm,.45cm)$) |-
	%			($(1) + (1cm,-.5cm)$) --
				cycle;
				
			\node[anchor=north west] at ($(0) - (1.5cm,-.45cm)$) {$\winreg_0$};
			\end{pgfonlayer}
		\end{scope}
		
	\end{tikzpicture}	
	\caption{Counterexample for requiring only the implication from right to left from the definition of prefix-independence for the computation of resilience.}
	\label{fig:prefix-dependent}
\end{figure}

First, consider the family~$\game_k= (\arena, \wincond_k)$ of games shown in Figure~\ref{fig:prefix-dependent}.
In $\game_k$, it is the goal of Player~$0$ to avoid more than~$k$ visits to~$v$.
Hence, for all plays~$\rho$ and all play prefixes~$w$ we have that~$w\rho \in \wincond$ implies~$\rho \in \wincond$.

In each of the~$\game_k$, a visit to~$v$ only occurs via a disturbance or if the initial vertex is $v$.
Hence, we have~$r_{\game_k}(v) = k$ and~$r_{\game_k}(v') = k+1$.
If we apply the algorithm from Section~\ref{sec_results}, however, the initial ranking function~$r_0$ has an empty domain, since we have $\winreg_1(\game_k) = \emptyset$.
Thus, the computation of the~$r_j$ immediately stabilizes, yielding~$r^*$ with empty domain.
Hence, that algorithm does, in general, not compute the correct resilience if only the implication from right to left from the definition of prefix-independence is satisfied.

\begin{figure}
\centering
\begin{tikzpicture}
	\node[p0] (0) at (0,0) {$v$};
	\node[p0] (1) at (2,0) {$v'$};
	
	\path
		(0) edge[loop left] (0)
		(0) edge[fault] (1)
		(1) edge[loop right] (1);
		
	\node[anchor=west] at (3.75,0) {$\wincond = V^\omega\setminus\{(v')^\omega\}$};
	
	\begin{pgfonlayer}{background}	
		\draw[black!15,thick,rounded corners,fill=black!5]
		($(0) - (1.5cm,.45cm)$) |-
		($(0) + (.75cm,.45cm)$) |-
		cycle;
				
		\node[anchor=north west] at ($(0) - (1.5cm,-.45cm)$) {$\winreg_0$};
	\end{pgfonlayer}
	
	\begin{pgfonlayer}{background}	
		\draw[black!15,thick,rounded corners,fill=black!5]
		($(1) - (.75cm,.45cm)$) |-
		($(1) + (1.5cm,.45cm)$) |-
		cycle;
				
		\node[anchor=north east] at ($(1) + (1.5cm,.45cm)$) {$\winreg_1$};
	\end{pgfonlayer}
\end{tikzpicture}
\caption{Counterexample for requiring only the implication from left to right from the definition of prefix-independence for the computation of resilience.}
\label{fig:suffix-closed}
\end{figure}

Conversely, consider the game~$\game$ shown in Figure~\ref{fig:suffix-closed}.
The winning condition of this game satisfies that, for all play prefixes~$w$ and all plays~$\rho$, we have that~$\rho \in \wincond$ implies~$w\rho \in \wincond$.
If we apply the algorithm from Section~\ref{sec_results}, however, the initial ranking~$r_0$ has the domain~$\set{v'}$ with~$r_0(v') = 0$, due to~$\winreg_1(\game) = \set{v'}$.
The disturbance update of~$r_0$ then yields the ranking~$r_1$ with~$r_1(v) = 1$ due to the single disturbance edge of~$\game$ and with~$r_1(v') = 0$.
At this point, the rankings stabilize and we obtain~$r^* = r_1$.

While we indeed have~$r_{\game}(v') = 0 = r^*(v')$, we furthermore have~$r_{\game}(v) = \omega + 1 \neq r^*(v)$, as every play starting in vertex~$v$ satisfies the winning condition.
Hence, this example showcases that the implication from left to right from the definition of prefix-independence also does not suffice for the algorithm from Section~\ref{sec_results} to correctly compute the resilience.
Thus, we indeed require full prefix-independence of the winning condition as a precondition for the correctness of that algorithm.

In the following subsection, we show that one can still leverage our algorithm from Section~\ref{sec_results} in order to compute the resilience of a wide range of games with prefix-dependent winning conditions.
To this end, we extend the framework of game reductions to games with disturbances, in such a way that the existence of $\alpha$-resilient strategies is preserved. Using this framework shows that Player~$0$ has a finite-state optimally resilient strategy in every game with $\omega$-regular winning condition.

%%%%%%%%%%%%%%%%%%%%%%%%%%%%%%%%%%%%%%%%
%%%%%%%%%%%%%%%%%%%%%%%%%%%%%%%%%%%%%%%%
\subsection{Prefix-dependent Winning Conditions}
\label{subsec_prefixdep}

We begin by introducing some notation regarding game reductions. 
An arena $\arena = (V, V_0, V_1, E, D)$ and a memory structure~$\mem = (M, \init, \update)$ for $\arena$ induce the expanded arena $\arena\times\mem = (V \times M, V_0 \times M, V_1 \times M, E', D')$ where~$E'$ is defined via $((v,m), (v',m')) \in E'$ if and only if $(v,v') \in E$ and $\update(m, v' ) = m'$. The disturbance edges~$D'$ are defined analogously, i.e.,  $((v,m), (v',m')) \in D'$ if and only if $(v,v') \in D$ and $\update(m, v' ) = m'$.
Every play $(v_0, b_0) (v_1,b_1) (v_2,b_2) \cdots$ in $\arena$ has a unique extended play 
	\[ \ext(\rho) = ((v_0, m_0), b_0) ((v_1, m_1),b_1) ((v_2, m_2),b_2) \cdots \] 
	in $\arena \times \mem$ defined by $m_0 = \init(v_0)$ and $m_{j+1} = \update(m_j,  v_{j+1})$, i.e., $m_j = \update^+(v_0 \cdots v_j)$. Play prefixes are translated analogously.

\begin{remark}
Let $\rho$ be a play in $\game$. Then, $\disturbances(\rho) = \disturbances(\ext(\rho))$.
\end{remark}

A game $\game = (\arena, \wincond)$ is {reducible} to $\game' = (\arena', \wincond')$
via $\mem$, written $\game \le_{ \mem } \game'$, if $\arena' = \arena \times
\mem$ and every play $\rho$ in $\game$ is won by the same player that wins $\ext(\rho)$ in $\game'$.

\begin{lemma} 
\label{lemma_reductionlemma} 
Let $\game \le_{
\mem } \game'$. Then, $r_\game(v) = r_{\game'}(v,\init(v))$ for all vertices~$v$ of $\game$.
\end{lemma}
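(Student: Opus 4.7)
The goal is to show both inequalities $r_\game(v) \ge r_{\game'}(v,\init(v))$ and $r_\game(v) \le r_{\game'}(v,\init(v))$ by translating strategies between the two games. Since the resilience is defined as the supremum of $\alpha$ for which Player~$0$ has an $\alpha$-resilient strategy, it suffices to show that for every $\alpha \in \omega+2$, Player~$0$ has an $\alpha$-resilient strategy from $v$ in $\game$ if and only if she has one from $(v,\init(v))$ in $\game'$. The plan is to set up two strategy translations (one in each direction) and verify that they preserve consistency, disturbance counts, and winners.

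For the direction $r_\game(v) \ge r_{\game'}(v,\init(v))$, the plan is to take an $\alpha$-resilient strategy $\sigma'$ for Player~$0$ in $\game'$ from $(v,\init(v))$ and define $\sigma$ in $\game$ by $\sigma(v_0\cdots v_j) = \pi_1(\sigma'((v_0,m_0)\cdots(v_j,m_j)))$, where $m_i = \update^+(v_0\cdots v_i)$ and $\pi_1$ is the projection onto the $V$-component. The key observation is that if $\rho$ is a play in $\game$ starting in $v$ and consistent with $\sigma$, then $\ext(\rho)$ is a play in $\game'$ starting in $(v,\init(v))$ and consistent with $\sigma'$ (by construction of $\sigma$ and the definition of $E'$ and $D'$, which encode the memory update directly into the edge relations). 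Since $\disturbances(\rho) = \disturbances(\ext(\rho))$ by the preceding remark, and the winner of $\rho$ in $\game$ equals the winner of $\ext(\rho)$ in $\game'$ by the definition of $\le_\mem$, every play $\rho$ starting in $v$, consistent with $\sigma$, and having fewer than $\alpha$ disturbances, is won by Player~$0$. Thus $\sigma$ is $\alpha$-resilient from $v$.

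For the converse $r_\game(v) \le r_{\game'}(v,\init(v))$, the plan is the symmetric construction: given an $\alpha$-resilient $\sigma$ from $v$ in $\game$, define $\sigma'$ on a prefix $(v_0,m_0)\cdots(v_j,m_j)$ in $\game'$ by $\sigma'((v_0,m_0)\cdots(v_j,m_j)) = (\sigma(v_0\cdots v_j),\update(m_j,\sigma(v_0\cdots v_j)))$. Here the crucial point is that any play $\rho'$ in $\game'$ starting in $(v,\init(v))$ is forced by the definition of $E'$ and $D'$ to have the form $\ext(\rho)$ for a unique play $\rho$ in $\game$ starting in $v$, since the second component of each vertex is determined by the update function applied to the first component. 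Consistency of $\rho'$ with $\sigma'$ then translates to consistency of $\rho$ with $\sigma$, and again disturbances and winners match, yielding $\alpha$-resilience of $\sigma'$ from $(v,\init(v))$.

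The proof is mostly bookkeeping, and I do not anticipate a genuine obstacle. The only point to handle carefully is the transparent but essential fact that plays in $\arena\times\mem$ starting at $(v,\init(v))$ stand in bijection with plays in $\arena$ starting at $v$ via $\ext$, and that this bijection preserves both the disturbance bits (by the remark) and consistency with the translated strategy. Once these bookkeeping facts are stated, preservation of winning is immediate from the reducibility definition, and the equality of resiliences follows by taking suprema over $\alpha \in \omega+2$.
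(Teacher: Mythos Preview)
Your proposal is correct and follows essentially the same approach as the paper: both directions are proved by the natural strategy translations via $\ext$, using that $\ext$ preserves disturbance counts (the remark) and winners (the definition of $\le_\mem$). Your formulation is in fact slightly more careful than the paper's in the $\game \to \game'$ direction, where you explicitly give the memory component $\update(m_j,\sigma(v_0\cdots v_j))$ of the output of $\sigma'$; the paper suppresses this component since it is forced by the edge relation of $\arena \times \mem$.
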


\begin{proof}
We show that Player~$0$ has an $\alpha$-resilient strategy~$\sigma'$ for $\game'$
from $(v, \init(v))$ if and only if she has an $\alpha$-resilient strategy~$\sigma$ for~$\game$ from $v$, which implies our claim. The translation of the strategies is the same as in the disturbance-free setting (see, e.g.,~\cite{Kaiser11}), but here we have to argue about resilience instead of just winning. 

\textbf{\myquot{\boldmath$\Leftarrow$}:} Given a strategy~$\sigma$ for $\game$, we define $\sigma'$ for $\game'$ via 
	\[ \sigma'((v_0, m_0)\cdots(v_j, m_j)) = \sigma(v_0 \cdots v_j) \enspace . \] Consider a play~$\rho' = ((v_0, m_0),b_0)((v_1, m_1),b_1)((v_2, m_2),b_2) \cdots$ consistent with $\sigma'$. If $m_0 = \init(v_0)$, then $\rho' = \ext(\rho)$ for $\rho = (v_0,b_0)( v_1,b_1)( v_2,b_2) \cdots$, which is consistent with $\sigma$. Hence, $\rho'$ and $\rho$ have the same winner and the same number of disturbances. Hence, if $\sigma$ is $\alpha$-resilient from a vertex~$v$, then $\sigma'$ is $\alpha$-resilient from $(v, \init(v))$.

\textbf{\myquot{\boldmath$\Rightarrow$}:} Given a strategy~$\sigma'$ for $\game'$, we define $\sigma$ for $\game$ via $\sigma(v_0 \cdots v_j) = v$, if $\sigma'((v_0, m_0)\cdots(v_j,m_j)) = (v,m)$ for some $m \in M$, where $m_{j'} = \update^+(v_0 \cdots v_{j'})$. 

A straightforward induction shows that a play in $\game$ is consistent with $\sigma$ if and only if its extended play in $\game'$ is consistent with $\sigma'$. Thus, these plays have the same winner and the same number of disturbances. Thus, again, if $\sigma'$ is $\alpha$-resilient from a vertex~$(v, \init(v))$ then $\sigma$ is $\alpha$-resilient from $v$.\qed 
\end{proof}

As usual for game reductions, we obtain a finite-state strategy for $\game$ when starting with a positional strategy in $\game'$. To this end, consider the proof of the second implication above. If~$\sigma$ is positional, then the strategy~$\sigma'$ is implemented by $\mem$ and the next-move function~$\nxt$ given by $\nxt(v,m) = v'$, if $\sigma(v,m) = (v',m')$ for some $m' \in M$.  

A similar construction works in case $\sigma'$ is finite-state, say implemented by $\mem'$. Then,~$\sigma$ is implemented by the product of $\mem$ and $\mem'$, which is defined as expected (we refer to, e.g.,~\cite{Kaiser11} for a formal definition). Altogether, we obtain the following result.

\begin{corollary}
\label{corollary_finitestate}
Let $\game \le_\mem \game'$.
\begin{enumerate}
	\item If Player~$0$ has an $\alpha$-resilient positional strategy from $(v, \init(v))$ in $\game'$, then she has an $\alpha$-resilient finite-state strategy from $v$ in $\game$, which is implemented by $\mem$. 
\item If Player~$0$ has an $\alpha$-resilient finite-state strategy from $(v, \init(v))$ in $\game'$, say implemented by $\mem'$, then she has an $\alpha$-resilient finite-state strategy from $v$ in $\game$, which is implemented by the product of $\mem$ and $\mem'$. 
\end{enumerate}
\end{corollary}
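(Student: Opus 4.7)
My plan is to obtain both items by instantiating the strategy translation from the ``$\Rightarrow$'' direction in the proof of Lemma~\ref{lemma_reductionlemma}, and observing that the $\game$-strategy constructed there only depends on $v_j$ and on $\update^+(v_0 \cdots v_j)$, plus (in the finite-state case) on whatever extra memory $\sigma'$ maintains. In other words, the memory structure~$\mem$ that underlies the reduction already tracks enough information to run a positional $\game'$-strategy, and a finite-state $\game'$-strategy needs just the product of $\mem$ with its own memory.

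For item~1, assume $\sigma'$ is positional on $\arena \times \mem$. I would define a next-move function $\nxt \colon V_0 \times M \to V$ by $\nxt(v,m) = v'$, where $v'$ is the first component of $\sigma'(v,m)$; positionality of $\sigma'$ makes this well-defined. Because $m_j = \update^+(v_0 \cdots v_j)$ is precisely the $M$-component of the extended play after reading $(v_0,b_0)\cdots(v_j,b_j)$, the strategy induced by $(\mem, \nxt)$ agrees with the strategy~$\sigma$ built in Lemma~\ref{lemma_reductionlemma} on all play prefixes. The lemma then yields the $\alpha$-resilience of $\sigma$ from~$v$ in~$\game$.

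For item~2, let $\sigma'$ be finite-state with memory $\mem' = (M', \init', \update')$ over $\arena\times\mem$, whose initialization function consumes vertices $(v,m) \in V\times M$. I would define the product memory $\mem \otimes \mem' = (M \times M', \init'', \update'')$ over~$\arena$ by $\init''(v) = (\init(v),\, \init'(v, \init(v)))$ and $\update''((m,m'), v') = (\update(m,v'),\, \update'(m', (v', \update(m,v'))))$, with next-move function $\nxt(v,(m,m'))$ equal to the first component of $\nxt'((v,m), m')$, where $\nxt'$ is the next-move function of $\sigma'$. A routine induction on prefix length shows that after reading $v_0 \cdots v_j$ in~$\game$, the $\mem\otimes\mem'$-state is exactly the pair consisting of $\update^+(v_0 \cdots v_j)$ and the $\mem'$-state reached along~$\ext$. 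Thus the induced finite-state strategy in~$\game$ coincides again with the strategy translated in Lemma~\ref{lemma_reductionlemma}, whence $\alpha$-resilience transfers.

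The main obstacle is just notational bookkeeping: $\mem'$ is a memory structure over the expanded arena and therefore consumes vertices in~$V \times M$, so the product memory must feed~$\mem'$ the synthetic vertex $(v', \update(m,v'))$ that~$\mem$ has just transitioned to, rather than~$v'$ alone. Once this is in place, no new game-theoretic argument is required, since preservation of the winner and of $\disturbances$ under~$\ext$ has already been recorded and exploited in Lemma~\ref{lemma_reductionlemma}.
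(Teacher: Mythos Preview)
Your proposal is correct and follows essentially the same route as the paper: both items are obtained by observing that the strategy~$\sigma$ produced in the ``$\Rightarrow$'' direction of Lemma~\ref{lemma_reductionlemma} is implemented by $\mem$ (respectively, by the product of $\mem$ and $\mem'$) with the next-move function you describe. You are in fact more explicit than the paper, which for item~2 just says the product is ``defined as expected'' and defers to a reference; your careful handling of the fact that $\mem'$ consumes vertices in $V\times M$ is exactly the bookkeeping needed.
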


Now, we can formulate the main theorem of this subsection, which shows that prefix-dependence is not a restriction, as long as the game is reducible to a prefix-independent one. Note that this is in particular true for every $\omega$-regular winning condition (see, e.g.,~\cite{GraedelThomasWilke02}): every such condition is recognized by a deterministic parity automaton, which can be turned into a memory structure which allows to reduce the original game to a parity game. 

\begin{theorem}
\label{thm_reductions}
Let $\game \le_\mem \game'$ so that $\game'$ has a prefix-independent winning condition, can be effectively computed from $\game$, and satisfies the assumptions from Section~\ref{subsec_strategies} (with finite-state strategies).

Then, the resilience of $\game$'s vertices and a finite-state optimally resilient strategy can be effectively computed. 
\end{theorem}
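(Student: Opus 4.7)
The plan is to reduce the problem to the prefix-independent case already handled by Theorem~\ref{theorem_main} (in its finite-state incarnation stated at the end of Section~\ref{subsec_strategies}), and then transport both the resilience values and the optimally resilient strategy back from $\game'$ to $\game$ using the reduction machinery just developed.

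First I would apply the finite-state variant of Theorem~\ref{theorem_main} to $\game'$. The hypotheses for this are explicitly assumed: $\game'$ is prefix-independent, all subgames $\game'_U$ and the rigged game $\game'_\rig$ are determined and effectively solvable, and Player~$0$ admits finite-state winning strategies in each of them that are effectively computable. Thus we effectively obtain both the map $r_{\game'}$ and a finite-state optimally resilient strategy $\sigma'$ for $\game'$, i.e., $\sigma'$ is $r_{\game'}(v',m')$-resilient from every vertex $(v',m')$ of $\game'$.

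Next I would read off the resilience values for $\game$ using Lemma~\ref{lemma_reductionlemma}, which gives $r_\game(v) = r_{\game'}(v,\init(v))$ for every vertex $v$ of $\game$. Since $\mem$ (and in particular $\init$) is part of the input and $r_{\game'}$ is already computed, this yields $r_\game$ effectively. For the strategy, I would invoke Corollary~\ref{corollary_finitestate}: the finite-state strategy $\sigma'$, which is $r_{\game'}(v,\init(v))$-resilient from $(v,\init(v))$, translates into a finite-state strategy $\sigma$ in $\game$ (implemented by the product of $\mem$ and the memory of $\sigma'$) that is $r_{\game'}(v,\init(v))$-resilient from $v$. Combining with the equality $r_\game(v) = r_{\game'}(v,\init(v))$, the induced strategy $\sigma$ is $r_\game(v)$-resilient from every $v$, hence optimally resilient in~$\game$.

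There is essentially no hard step here beyond verifying effectiveness at each stage: $\game'$ itself can be computed from $\game$ by assumption; the application of Theorem~\ref{theorem_main} to $\game'$ is effective by the assumed effectiveness of solving the $\game'_U$ and $\game'_\rig$; and the translation of $\sigma'$ into $\sigma$ in Corollary~\ref{corollary_finitestate} is a syntactic product construction. The only subtle point worth flagging is that optimality transfers cleanly only because Lemma~\ref{lemma_reductionlemma} preserves resilience values exactly (not just existence of winning strategies), so a strategy that is optimally resilient in $\game'$ automatically induces one that is optimally resilient in $\game$ — no further combination across different starting vertices is required.
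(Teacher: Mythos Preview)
Your proposal is correct and matches the paper's own proof, which likewise derives the result directly from Lemma~\ref{lemma_reductionlemma}, Theorem~\ref{theorem_main} (in its finite-state form), and Corollary~\ref{corollary_finitestate}. Your write-up simply spells out in more detail the effectiveness and the transfer of optimality that the paper leaves implicit.
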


\begin{proof}
This is a direct consequence of Lemma~\ref{lemma_reductionlemma} and Theorem~\ref{theorem_main}. To obtain an optimally resilient strategy, we apply Corollary~\ref{corollary_finitestate} for finite-state strategies.\qed 
\end{proof}

Recall the family of games shown in Figure~\ref{fig:prefix-dependent} in which Player~$0$ aims to prevent more than~$k$ visits to the vertex~$v_1$ for some parameter~$k \in \omega$.
Such a game can be reduced to a parity game using a memory structure implementing a counter up to~$k+1$. Such a memory structure has $k+1$ memory states, and a straightforward pumping argument shows that there is no smaller memory structure.

Thus, we obtain an optimally resilient strategy for Player~$0$ that is implemented by a memory structure with~$k+1$ states.
While this strategy is indeed optimally resilient, it is not of minimal size: in fact, the unique strategy for Player~$0$ in $\game_k$ is positional and optimally resilient.
Thus, the approach of computing optimally resilient strategies for games with prefix-dependent winning conditions via reductions to prefix-independent winning conditions is not optimal in that sense, as it may yield unnecessarily large optimally resilient strategies.
In current research, we study how to synthesize minimal optimally resilient strategies for games with prefix-dependent winning conditions.

Moreover, in the case of prefix-dependent winning conditions, the question arises whether or not optimally resilient strategies may be necessarily larger than winning ones.
It is easy to construct a game in which Player~$0$ has a positional winning strategy, but an optimally resilient one requires an infinite amount of memory. One example is a game with a dedicated vertex~$v$ with a self-loop, such that using the self-loop ad infinitum is winning for Player~$0$. Furthermore, there is a disturbance edge leading from $v$ into a disturbance-free subgame in which Player~$0$ needs an infinite amount of memory to win. 

However, this example is not very useful, as Player~$0$ needs infinite memory to win the game from some vertex of her winning region. A more interesting question for further research is whether a result similar to Theorem~\ref{theorem_main} holds true for prefix-dependent games with positional winning strategies, e.g., weak parity games~\cite{Chatterjee08} or bounded parity games~\cite{ChatterjeeHenzingerHorn09}. However, for both of these conditions, monotonicity arguments allow to transform finite-state optimally resilient strategies into positional ones (similar to the construction in~\cite[Section 5]{FZ14}). However, these arguments rely on monotonicity properties of the parity condition and are therefore unlikely to be generalizable. On the other hand, we are not aware of an example of a class of winning conditions that always allow for positional winning strategies for Player~$0$, but require memory to implement optimally resilient strategies. In future work, we investigate whether the blowup introduced by the reduction can be avoided.

%%%%%%%%%%%%%%%%%%%%%%%%%%%%%%%%%%%%%%%%%%%%%%%%%%%%%%%%%%%%%%%%%%
%%%%%%%%%%%%%%%%%%%%%%%%%%%%%%%%%%%%%%%%%%%%%%%%%%%%%%%%%%%%%%%%%%
\section{Outlook}
\label{sec_outlook}
We have developed a fine-grained view on the quality of strategies:
instead of evaluating whether or not a strategy is winning, we compute its resilience against intermittent disturbances.
While this measure of quality allows constructing ``better'' strategies than the distinction between winning and losing strategies, there remain aspects of optimality that are not captured in our notion of resilience.
In this section we discuss these aspects and give examples of games in which there are crucial differences between optimally resilient strategies.
In further research, we aim to synthesize optimal strategies with respect to these criteria.

As a first example, consider the parity game shown in Figure~\ref{fig:recovery}.
Vertices~$v_0$ and~$v_3$ have resilience~$1$ and~$\omega+1$, respectively, while vertices~$v_1$,~$v_2$, and~$v'_2$ have resilience~$0$.
Player~$0$'s only choice consists of moving to~$v_2$ or to~$v'_2$ from~$v_1$.
Let~$\sigma$ and~$\sigma'$ be strategies for Player~$0$ that always move to~$v_2$ and~$v'_2$ from~$v_1$, respectively.
Both strategies are optimally resilient.
Hence, the algorithm from Section~\ref{sec_results} may yield either one, depending on the underlying parity game solver used.
Intuitively, however,~$\sigma'$ is preferable for Player~$0$, as a play prefix ending in~$v'_2$ may proceed to her winning region if a single disturbance occurs.
All plays encountering~$v_2$ at some point, however, are losing for her.
Hence, another interesting avenue for further research is to study \openquestion{how to recover from losing}, i.e., how to construct strategies that leverage disturbances in order to leave Player~$1$'s winning region. 
For safety games, this has been addressed by Dallal, Neider, and Tabuada~\cite{DBLP:conf/cdc/DallalNT16}.

\begin{figure}[h]
	\centering
	\begin{tikzpicture}
		
		\begin{scope}[shift={(4.5,-.35)}]
		\ParityVertexZero{0}{v_0}{0}{(0,0)}
		\ParityVertexZero{1}{v_1}{1}{(2,0)}
		\ParityVertexZero{2}{v_2}{1}{(4,.4)}
		\ParityVertexZero{3}{v'_2}{1}{(4,-.4)}
		\ParityVertexZero{4}{v_3}{0}{(6,.4)}
		
		\path
			(0) edge[loop left] (0) edge[fault] (1)
			(1) edge (2) edge (3)
			(2) edge[loop right] (2)
			(3) edge[loop right] (3) edge[fault] (4)
			(4) edge[loop right] (4);

		\begin{pgfonlayer}{background}
			\coordinate (W1-west) at ($(0) ! .5 ! (1)$);
			\coordinate (W1-southwest) at (W1-west |- 3);
			\coordinate (W1-northwest) at (W1-west |- 2);
			\coordinate (W1-southeast) at (3);
			\coordinate (W1-northeast) at (2);
		
			\draw[black!15,thick,rounded corners,fill=black!5]
					($(W1-southwest) - (-.1cm,.45cm)$) --
					($(W1-northwest) + (.1cm,.45cm)$) --
					($(W1-northeast) + (1cm,.45cm)$) --
					($(W1-southeast) + (1cm,-.45cm)$) --
					cycle;
			\node[anchor=north west] at ($(W1-northwest) + (.1cm,.45cm)$) {$\winreg_1$};
			
			\coordinate (W0-northwest) at (0 |- 2);
			\coordinate (W0-northeast) at (4);
			\coordinate (W0-southwest) at (0 |- 3);
			\coordinate (W0-southeast) at (4 |- 3);
			
			\draw[black!15,thick,rounded corners,fill=black!5]
					($(W0-southwest) - (1cm,.75cm)$) --
					($(W0-northwest) + (-1cm,.45cm)$) --
					($(W1-northwest) + (-.1cm,.45cm)$) --
					($(W1-southwest) + (-.1cm,-.55cm)$) --
					($(W1-southeast) + (1.2cm,-.55cm)$) --
					($(W1-northeast) + (1.2cm,.45cm)$) --
					($(W0-northeast) + (1cm,.45cm)$) --
					($(W0-southeast) + (1cm,-.75cm)$) --
					cycle;
			\node[anchor=north west] at ($(W0-northwest) + (-1cm,.45cm)$) {$\winreg_0$};
		\end{pgfonlayer}	
		\end{scope}

	\end{tikzpicture}	
	\caption{Intuitively, moving from~$v_1$ to~$v'_2$ is preferable for Player~$0$, as it allows her to possibly ``recover'' from a first disturbance with the ``help'' of a second one.}
	\label{fig:recovery}
\end{figure}
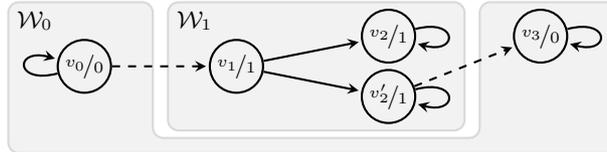

The previous example shows that Player~$0$ can still make ``meaningful'' choices even if the play has moved outside her winning region.
The game~$\game$ shown in Figure~\ref{fig:omega-resilience-distinction} demonstrates that she can do so as well when remaining in vertices of resilience~$\omega$.
Every vertex in~$\game$ has resilience~$\omega$, since every play with finitely many disturbances eventually remains in vertices of color~$0$.
Moreover, the only choice to be made by Player~$0$ is whether to move to vertex~$v_1$ or to vertex~$v'_1$ from vertex~$v_0$.
Let~$\sigma$ and~$\sigma'$ be positional strategies that implement the former and the latter choice, respectively.

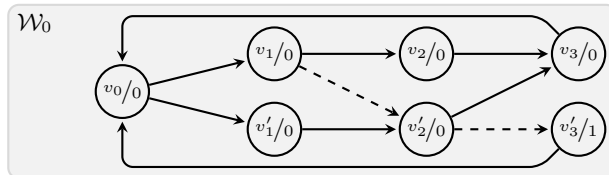
\begin{figure}[h]
	\centering
	\begin{tikzpicture}
		\ParityVertexZero{0}{v_0}{0}{(0,0)}
		\ParityVertexZero{1}{v_1}{0}{(2,.5)}
		\ParityVertexZero{2}{v_2}{0}{(4,.5)}
		\ParityVertexZero{3}{v_3}{0}{(6,.5)}
		\ParityVertexZero{4}{v'_1}{0}{(2,-.5)}
		\ParityVertexZero{5}{v'_2}{0}{(4,-.5)}
		\ParityVertexZero{6}{v'_3}{1}{(6,-.5)}
		
		\path
			(0) edge (1) edge (4)
			(1) edge (2) edge[fault] (5)
			(2) edge (3)
			(4) edge (5)
			(5) edge[fault] (6) edge (3);
			
		\path[draw,rounded corners,->,thick]
			(3) -- ($(3) + (-.5cm,.5cm)$) -| (0);
		\path[draw,rounded corners,->,thick]
			(6) -- ($(6) - (.5cm,.5cm)$) -| (0);
			
		\begin{pgfonlayer}{background}
			\coordinate (northwest) at (0 |- 1);
			\coordinate (southwest) at (0 |- 4);
			\coordinate (northeast) at (3);
			\coordinate (southeast) at (6);
			
			\coordinate (W0-label) at ($(northwest) ! .5 ! (southwest)$);
			
			\draw[black!15,thick,rounded corners,fill=black!5]
				($(southwest) + (-1.5cm,-.65cm)$) --
				($(northwest) + (-1.5cm,.65cm)$) --
				($(northeast) + (.5cm,.65cm)$) --
				($(southeast) + (.5cm,-.65cm)$) --
				cycle;
				
			\node[anchor=north west] at ($(northwest) + (-1.5cm,.65cm)$) {$\winreg_0$};
		\end{pgfonlayer}

	\end{tikzpicture}
	\caption{Moving from~$v_0$ to~$v_1$ allows Player~$0$ to minimize visits to odd colors, while moving to~$v'_1$ allows her to minimize the occurrence of disturbances.}
	\label{fig:omega-resilience-distinction}
\end{figure}

First consider a scenario in which visiting an odd color models the occurrence of some undesirable event, e.g., that a request has not been answered.
In this case, Player~$0$ should aim to prevent visits to~$v'_3$ in~$\game$, the only vertex of odd color.
Hence, the strategy~$\sigma$ should be more desirable for her, as it requires two disturbances in direct succession in order to visit $v'_3$.
When playing consistently with~$\sigma'$, however, a single disturbance suffices to visit~$v'_3$.

On the other hand, consider a setting in which Player~$0$'s goal is to avoid the occurrence of disturbances.
In that case,~$\sigma'$ is preferable over~$\sigma$, as it allows for fewer situations in which disturbances may occur, since no disturbances are possible from vertices~$v_2$ and~$v_3$.

Note that the goals of minimizing visits to vertices of odd color and minimizing the occurrence of disturbances are not contradictory:
if both events are undesirable, it may be optimal for Player~$0$ to combine the strategies~$\sigma$ and~$\sigma'$.
In general, it is interesting to study how to \openquestion{how to best brace for a finite number of disturbances}.

Recall that, due to Theorem~\ref{theorem_main_parity}, optimally resilient strategies for parity games do not require memory.
In contrast, the game shown in Figure~\ref{fig:memory} demonstrates that additional memory can serve to further improve such strategies.
Any strategy for Player~$0$ that does not stay in~$v_1$ from some point onwards is optimally resilient.
However, every visit to~$v_2$ risks a disturbance occurring, which would lead the play into a losing sink for Player~$0$.
Hence, it is in her best interest to remain in vertex~$v_1$ for as often as possible, thus minimizing the possibility for disturbances to occur.
This behavior does, however, require memory to implement, as Player~$0$ needs to count the visits to~$v_1$ in order to not remain in that state ad infinitum. 
Even worse, for each optimally resilient strategy~$\sigma$ with finite memory there exists another optimally resilient strategy that uses more memory, but visits~$v_2$ more rarely than~$\sigma$, reducing the possibilities for disturbances to occur.
Hence, it is interesting to study \openquestion{how to balance avoiding disturbances with satisfying the winning condition.}
This is particularly interesting if there is some cost assigned to disturbances.

\begin{figure}[h]
	\centering
	\begin{tikzpicture}
				
		\begin{scope}[shift={(6.5,.55)}]
		\ParityVertexZero{0}{v_0}{1}{(6,0)}
		\ParityVertexZero{1}{v_2}{2}{(4,0)}
		\ParityVertexZero{2}{v_1}{1}{(2,0)}
		
		\path
			(0) edge[loop right] (0)
			(1) edge[fault] (0)
			(1) edge[bend left=10] (2)
			(2) edge[bend left=10] (1) edge[loop left] (2);
			
		\begin{pgfonlayer}{background}
		
		\draw[black!15,thick,rounded corners,fill=black!5]
			($(0) + (-1cm,.5cm)$) --
			($(0) + (1.6cm,.5cm)$) --
			($(0) + (1.6cm,-.5cm)$) --
			($(0) + (-1cm,-.5cm)$) --
			cycle;
			
		\node[anchor=north west] at ($(0) + (+.9cm,.5cm)$) {$\winreg_1$};
		
		\draw[black!15,thick,rounded corners,fill=black!5]
			($(2) + (-1.5cm,.5cm)$) --
			($(1) + (.9cm,.5cm)$) --
			($(1) + (.9cm,-.5cm)$) --
			($(2) + (-1.5cm,-.5cm)$) --
			cycle;
			
		\node[anchor=north west] at ($(2) + (-1.5cm,.5cm)$) {$\winreg_0$};
			
		\end{pgfonlayer}	
		\end{scope}

	\end{tikzpicture}
	\caption{Additional memory allows Player~$0$ to remain in~$v_1$ longer and longer, thus decreasing the potential for disturbances.}
	\label{fig:memory}
\end{figure}
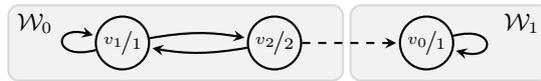

In quantitative games, there is a further tradeoff between resilience and the semantic quality of strategies. As a simple example, consider the parity game in Figure~\ref{fig:tradeoff} and assume, for the sake of argument, that Player~$0$ aims to maximize the maximal color seen infinitely often. Thus, when it comes to semantic quality of strategies, Player~$0$ prefers moving to $v_2$ over moving to $v_0$, when starting in $v_1$. However, $v_2$ has  resilience one while $v_0$ has resilience two. Hence, in this aspect, Player~$0$ prefers moving to $v_0$ over moving to $v_2$. In general, it is an interesting question to \openquestion{determine the tradeoff between resilience and semantic quality and to compute strategies that optimize both aspects, if possible.}

\begin{figure}[h]
	\centering
	\begin{tikzpicture}
				
		\begin{scope}[shift={(6.5,.55)}]
		\ParityVertexZero{1}{v_1}{1}{(0,0)}
		\ParityVertexZero{0}{v_0}{0}{(2,.5)}
		\ParityVertexZero{2}{v_2}{2}{(2,-.5)}
		\ParityVertexZero{3}{v_0'}{0}{(4,.5)}
		\ParityVertexZero{4}{v_3}{3}{(6,0)}

		\path
			(1) edge (0)
			(1) edge (2)
			(2) edge[fault] (4)
			(0) edge[fault] (3)
			(3) edge[fault] (4)
			(4) edge[loop right] ()
			(0) edge[loop above] ()
			(2) edge[loop below] ()
			(3) edge[loop above] ();
			
		\begin{pgfonlayer}{background}
		
		\draw[black!15,thick,rounded corners,fill=black!5]
			($(0) + (-3.1cm,1cm)$) --
			($(0) + (2.9cm,1cm)$) --
			($(0) + (2.9cm,-2cm)$) --
			($(0) + (-3.1cm,-2cm)$) --
			cycle;
			
		\node[anchor=north west] at ($(0) + (-3cm,.95cm)$) {$\winreg_0$};
		
		\draw[black!15,thick,rounded corners,fill=black!5]
			($(4) + (-.9cm,1.5cm)$) --
			($(4) + (.9cm,1.5cm)$) --
			($(4) + (.9cm,-1.5cm)$) --
			($(4) + (-.9cm,-1.5cm)$) --
			cycle;
			
		\node[anchor=north west] at ($(4) + (-.8cm,1.45cm)$) {$\winreg_1$};
			
		\end{pgfonlayer}	
		\end{scope}

	\end{tikzpicture}
	\caption{A tradeoff between resilience and semantic quality (measured in the maximal color occurring infinitely often).}
	\label{fig:tradeoff}
\end{figure}
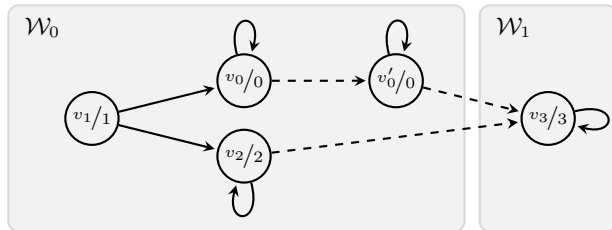

Finally, another important and interesting aspect, which falls outside the scope of this paper, is to provide general guidelines and best practices on how to model synthesis problems by games with disturbances.
We will address these problems in future research.

%%%%%%%%%%%%%%%%%%%%%%%%%%%%%%%%%%%%%%%%
%%%%%%%%%%%%%%%%%%%%%%%%%%%%%%%%%%%%%%%%
\section{Related Work}
\label{sec_relatedwork}
The notion of \emph{unmodeled intermittent disturbances} in infinite games has recently been formulated by Dallal, Neider, and Tabuada~\cite{DBLP:conf/cdc/DallalNT16}. In that work, the authors also present an algorithm for computing optimally resilient strategies for safety games with disturbances, which is an extension of the classical attractor computation~\cite{GraedelThomasWilke02}. Due to the relatively simple nature of such games, however, this algorithm cannot easily be extended to handle more expressive winning conditions, and the approach presented in this work relies on fundamentally different ideas.

\emph{Resilience} is not a novel concept in the context of reactive systems synthesis.
It appears, for instance, in the work by Topcu et al.~\cite{DBLP:conf/hybrid/TopcuOLM12} as well as Ehlers and Topcu~\cite{DBLP:conf/hybrid/EhlersT14}. 
A notion of resilience that is very similar to the one considered here has been proposed by Huang et~al.~\cite{DBLP:journals/tse/HuangPSW16}, where the game graph is augmented with so-called ``error edges''. However, this setting differs from the one studied in this work in various aspects.
Firstly, Huang et~al.\ work in the framework of concurrent games and model errors as being under the control of Player~$1$. This contrasts to the setting considered here, in which the players play in alternation and disturbances are seen as rare events rather than antagonistic to Player~$0$.
Secondly, Huang et~al.\ restrict themselves to safety games, whereas we consider a much broader class of infinite games.
Finally, Huang et~al.\ compute resilient strategies with respect to a fixed parameter $k$, thus requiring to repeat the computation for various values of $k$ to find optimally resilient strategies. In contrast, our approach computes an optimal strategy in a single run.
Hence, they consider a more general model of interaction, but only a simple winning condition, while the notion of disturbances considered here is incomparable to theirs.

Related to resilience are various notions of \emph{fault tolerance}~\cite{DBLP:journals/toplas/AttieAE04,DBLP:conf/fsttcs/BrihayeGHMPR15,DBLP:journals/sttt/EbnenasirKA08,DBLP:journals/fmsd/GiraultR09} and \emph{robustness}~\cite{DBLP:journals/acta/BloemCGHHJKK14,DBLP:conf/cav/BloemCHJ09,DBLP:journals/corr/BloemEJK14,DBLP:journals/jcss/BloemJPPS12,DBLP:journals/tecs/MajumdarRT13,DBLP:journals/tac/TabuadaCRM14,DBLP:conf/csl/TabuadaN16}.
For instance, Brihaye et al.~\cite{DBLP:conf/fsttcs/BrihayeGHMPR15} consider quantitative games under failures, which are a generalization of sabotage games~\cite{DBLP:conf/birthday/Benthem05}.
The main difference to our setting is that Brihaye et al.\ consider failures---embodied by a saboteur player---as antagonistic, whereas we consider disturbances as non-antagonistic events. Moreover, solving a parity game while maintaining a cost associated with the sabotage semantics below a given threshold is \exptime-complete, whereas our approach computes optimally resilient controllers for parity conditions in quasipolynomial time.

Besides fault tolerance, robustness in the area of reactive controller synthesis has also attracted considerable interest in the recent years, typically in settings with specifications of the form~$\varphi \Rightarrow \psi$ stating that the controller needs to fulfill the guarantee~$\psi$ if the environment satisfies the assumption~$\varphi$. 
A prominent example of such work is that of Bloem et al.~\cite{DBLP:journals/acta/BloemCGHHJKK14}, in which the authors understand robustness as the property that ``if assumptions are violated temporarily, the system is required to recover to normal operation with as few errors as possible'' and consider the synthesis of robust controllers for the GR(1) fragment of Linear Temporal Logic~\cite{DBLP:journals/jcss/BloemJPPS12}. Other examples include quantitative synthesis~\cite{DBLP:conf/cav/BloemCHJ09}, where robustness is defined in terms of payoffs, and the synthesis of robust controllers for cyber-physical systems~\cite{DBLP:journals/tecs/MajumdarRT13,DBLP:journals/tac/TabuadaCRM14}.
For a more in-depth discussion of related notions of resilience and robustness in reactive synthesis, we refer the interested reader to Dallal, Neider, and Tabuada's section on related work~\cite[Section I]{DBLP:conf/cdc/DallalNT16}. Moreover, a survey of a large body of work dealing with robustness in reactive synthesis has been presented by Bloem et al.~\cite{DBLP:journals/corr/BloemEJK14}.

Finally, note that for the special case of parity games, we can also characterize vertices of finite resilience (cf.~ Subsection~\ref{subsec_finiteresil}) by a reduction to finding optimal strategies in energy parity games~\cite{DBLP:journals/tcs/ChatterjeeD12}, which yields the same complexity as our algorithm (though such a reduction would not distinguish between vertices with resilience~$\omega$ and vertices with resilience~$\omega+1$. Also, it is unclear if and how this reduction can be extended to other winning conditions and if custom-made solutions would be required for each new class of game. By contrast, our refinement-based approach works for any class of infinite games that satisfies the mild assumptions discussed in Section~\ref{sec_discussion}.

%%%%%%%%%%%%%%%%%%%%%%%%%%%%%%%%%%%%%%%%%%%%%%%%%%%%%%%%%%%%%%%%%%
%%%%%%%%%%%%%%%%%%%%%%%%%%%%%%%%%%%%%%%%%%%%%%%%%%%%%%%%%%%%%%%%%%
\section{Conclusion} 
\label{sec_conc}
% !Tex root=../main.tex

We presented an algorithm for computing optimally resilient strategies in games with disturbances that is applicable to any game that satisfies some mild (and necessary) assumptions. Thereby, we have vastly generalized the work of Dallal, Neider, and Tabuada, who only considered safety games. Furthermore, we showed that optimally resilient strategies are typically of the same size as classical winning strategies. Finally, we have illustrated numerous novel phenomena that appear in the setting with disturbances but not in the classical one. 
Studying these phenomena is a very promising direction of future work.

As part of future work, we are currently implementing our proposed method on top of the parity game solver Oink~\cite{DBLP:conf/tacas/Dijk18} and SCOTS~\cite{DBLP:conf/hybrid/RunggerZ16}, a tool for the synthesis of controllers in the context of dynamic and cyber-physical systems.
Besides developing an end-to-end synthesis tool for controllers of dynamic and cyber-physical systems, a major part of this effort is to evaluate the impact of the polynomial overhead as compared to classical parity game solvers.
Preliminary experiments with this prototype implementation suggest that the additional overhead does not impact the overall performance much.

\bibliographystyle{plain}      
\bibliography{biblio}

\end{document}